\theoremstyle{plain}
\newcommand{\bh}{\operatorname{bh}}
\newcommand{\LL}{\sim_{\ell\ell}}
\newcommand{\LR}{\sim_{\ell r}}
\newcommand{\RL}{\sim_{r\ell}}
\newcommand{\RR}{\sim_{rr}}
\newcommand{\FO}{\mathbf{FO}}
\newcommand{\MSO}{\mathbf{MSO}}
\newcommand{\cA}{\mathcal{A}}
\newcommand{\cB}{\mathcal{B}}
\newcommand{\cC}{\mathcal{C}}
\newcommand{\trans}[1]{\mathchoice{\xrightarrow{#1}}{\xrightarrow{\smash{\lower1pt\hbox{$\scriptstyle #1$}}}}{\text{Error}}{\text{Error}}}
\newcommand{\isp}[0]{\hphantom{a}}
\renewcommand{\leq}{\leqslant}
\newcommand{\tvi}{\vrule height 12pt depth 5 pt width 0 pt}
\newsavebox{\Jclass}
\sbox{\Jclass}{
  \begin{tabular}{|c|c|}
  \hline
  \tvi $^*aba$ & $^*abb$ \\
  \hline
  \tvi $^*bba$ & $^*bb$ \\
  \hline
  \end{tabular}
  }
\title{Aperiodic two-way transducers and $\FO$-transductions}
\author[1]{Olivier Carton}
\author[2,3]{Luc Dartois}
\affil[1]{LIAFA, Universit\'e Paris Diderot \\
  \texttt{Olivier.Carton@liafa.univ-paris-diderot.fr}}
\affil[2]{LIF, UMR7279 Aix-Marseille Université \& CNRS \\
  \texttt{luc.dartois@lif.univ-mrs.fr}}
\affil[3]{Centrale Marseille}
\authorrunning{O. Carton and L. Dartois} 
\subjclass{F.4.3 Formal Languages}
\keywords{Transducer, first-order, two-way, transition monoid, aperiodic}
\begin{document}

\maketitle

\begin{abstract}
  Deterministic two-way transducers on finite words have been shown by
  Engelfriet and Hoogeboom to have the same expressive power as
  $\MSO$-transductions.  We introduce a notion of aperiodicity for these
  transducers and we show that aperiodic transducers correspond exactly
  to $\FO$-transductions.  This lifts to transducers the classical equivalence
  for languages between $\FO$-definability, recognition by aperiodic
  monoids and acceptance by counter-free automata.
\end{abstract}

\section{Introduction} \label{sec:introduction}

The regularity of a language of finite words is a central notion in
theoretical computer science.  Combining several seminal results, it is
equivalent whether a language is
\begin{enumerate}
\item[a)] accepted by a (non-)deterministic one-way or two-way automaton
  \cite{RabinScott59} and \cite{She59},
\item[b)] described by a regular expression \cite{Kleene56},
\item[c)] defined in (Existential) Monadic Second Order ($\MSO$) logic
  \cite{Buchi60},
\item[d)] the preimage by a morphism into a finite monoid \cite{Nerode59}.
\end{enumerate}
Since then, the characterization of fragments of $\MSO$ has been a very
successful story.  Using this equivalence between different formalisms,
several fragments of $\MSO$ have been characterized by algebraic means and
shown to be decidable. Combining results of Sch\"utzenberger
\cite{Schutzenberger65} and of McNaughton and Papert \cite{MP71} yields,
for instance, that a language of finite words is First Order ($\FO$) definable
if and only if all the groups contained in its syntactic monoid are trivial
(aperiodic).  From the results of Sch\"utzenberger \cite{Schutzenberger76}
and others \cite{TherienWilke98}, it is also known that a language is First
Order definable with two variables ($\FO^2$) if and only if its syntactic
monoid belongs to the class $\mathbf{DA}$ which is easily decidable.

Automata can be equipped with output to make them compute functions and
relations.  They are then called transducers.  Note then that all variants
are no longer equivalent as they are as acceptors.  Deterministic
transducers compute a subclass of rational functions called sequential
functions \cite{Choffrut77}.  Two-way transducers are also more powerful
than one-way transducers (see Example~\ref{exa:transducer}).  The study of
transducers has many applications.  Transducers are used to model coding
schemes (compression schemes, convolutional coding schemes, coding schemes
for constrained channels, for instance).  They are also widely used in
computer arithmetic \cite{Frougny99}, natural language processing
\cite{RocheSchabes97} and programs analysis \cite{CohenCollard98}.

The equivalence between automata and $\MSO$ has been first lifted to
transducers and the functions they realize by Engelfriet and Hoogeboom
\cite{EH01}.  They show that a function from words to words can be realized
by a deterministic two-way transducer if and only it is a $\MSO$-transduction.
First, this result deals surprisingly with two-way transducers rather than
one-way transducers which are much simpler.  Second, the $\MSO$-definability
used for automata is replaced by $\MSO$ graphs transductions defined by
Courcelle \cite{Cou94}.  A $\MSO$-transduction is a function where the output
graph is defined as a $\MSO$-interpretation into a fixed number of copies of
the input graph.  In the result of Engelfriet and Hoogeboom, words are seen
as linear graphs whose vertices carry the symbols.

\subsection*{Contribution}

In this paper, we combine the approach of Engelfriet and Hoogeboom with the
one of Sch\"utzenberger, McNaughton and Papert.  We introduce a notion of
aperiodicity for two-way transducers and we show that it corresponds to
$\FO$-transductions.  By $\FO$-transduction, we mean $\MSO$-transduction
where the interpretation is done through $\FO$-formulas.  The definition of
aperiodicity is achieved by associating a transition monoid with each
two-way transducer.  The construction of this algebraic object is already
implicit in the literature \cite{She59,Pec85,Bir89}.  In order to obtain
our result, we have considered a different logical signature for
transductions from the one used in \cite{EH01}.  In \cite{EH01}, the
signature contains the symbol predicates to check symbols carried by
vertices and the edge predicate of the graph.  Since words are viewed as
linear graphs, this is the same as the signature with the successor
relation on words.  In our result, the signature contains the symbol
predicates and the order (of the linear graph).  This is equivalent for
$\MSO$-transductions since the order can easily be defined with the
successor by a $\MSO$-formula.  This is however not equivalent any more for
$\FO$-transductions that we consider.  With this signature, the definition
of $\FO$-transduction requires that the order on the output word can be
defined by a $\FO$-formula.  The change in the signature is necessary to
obtain the result.

\subsection*{Related work}

The aperiodic rational functions, that is, functions realized by a one-way
transducer with an aperiodic transition monoid have already been
characterized in \cite{ReutenauerSchutz95}.  This characterization is not
based on logic but rather on the inverse images of aperiodic languages.

The notion of aperiodic two-way transducer was already defined and studied
in~\cite{MSAl00}, although their model defined length-preserving functions
and the transducers had both their reading and writing heads moving
two-way.  The assumption that the function is length preserving makes the
relation between the input and the output easier to handle.

Recently, Bojanczyk, in~\cite{Boj14}, also characterized first-order
definable transducers for machines using a finer but more demanding
semantic, the so-called origin semantic.

In~\cite{AC10}, Alur and \v Cern\'y defined the streaming string
transducers, a one-way deterministic model equivalent to deterministic
two-way transducers and $\MSO$ transductions.  More recently, Filiot,
Krishna and Trivedi proposed in~\cite{FKT14} a definition of transition
monoid for this model. They also proved that aperiodic and $1$-bounded
streaming string transducers have the same expressive power as $\FO$
transductions, which is one of the models considered by our main result.

\subsection*{Structure}

The paper is organized as follows.  Definitions of two-way transducers and
FO-transductions are provided in Section~\ref{sec:definitions}.  The
construction of the transition monoid associated with a transducer is given
there.  The main result is stated in Section~\ref{sec:main}.
Section~\ref{sec:composition} focuses on one aspect of the stability by
composition of functions realized by aperiodic two-way transducers. It is
one of the main ingredients used in the proof of the main result.  The
proof itself is sketched in Sections \ref{sec:2WtoFOT}
and~\ref{sec:FOTtoA2W}.

\section{Definitions} \label{sec:definitions}

In this section, we present the different models that will be used
throughout the article.

\subsection{Two-way transducers}\label{Subsection:2w}

A transducer is an automaton equipped with outputs.  While an input word is
processed along a run by the transducer, each used transition outputs some
word.  All these output words are concatenated to form the output of the run.
The automaton might be one-way or two-way but we mainly consider two-way
transducers in this paper.  When the transducer is non-deterministic, there
might be several runs and therefore several output words for a single input
word.  All two-way transducers considered in this paper are deterministic.
For each input word, there is then at most one valid run and one output
word.  The partial function which maps each input word to the corresponding
output word is said to be \emph{realized} by the transducer.  The automaton
obtained by forgetting the outputs is called the \emph{input automaton} of
the transducer.

A two-way transducer is a very restricted variant of a Turing machine with
an input and an output tape.  First, the input tape is read-only. Second,
the output tape is write-only and the head on this tape only moves forwards.
Written symbols on this tape cannot be over-written later by other symbols.

\begin{figure}
  \begin{center}
   \begin{minipage}[c]{0.4\linewidth}
	  \begin{tikzpicture}[scale=0.80, initial text=,inner sep=0pt]
  \node[state,initial,accepting where=left, initial where=left] (q) at (0,0) {$1$};
  \node[state] (r) at (4,0) {$2$};
  \node[state, accepting,accepting where=left] (s) at (2,-2) {$3$};
  \path[->] (q) edge [loop above]   node   {\small{$a|a,+1$}}();
  \node () at (-0.04,1.84) { \small{${\vdash}|\epsilon,+1$}};
  \path[->] (q) edge [above] node {\small{$b|\epsilon,-1$}}(r);
  \node () at (1.94,0.70) { \small{${\dashv}|\epsilon,-1$}};
  \path[->] (r) edge [loop above]   node   {\small{$a|b,-1$}}();
  \path[->] (r) edge [below right] node {\small{$b|\epsilon,+1$}} (s);
  \node () at (3.57,-1.67) {\small{${\vdash}|\epsilon,+1$}};
  \path[->] (s) edge [loop below] node {\small{$a|\epsilon,+1$}} ();
  \path[->] (s) edge [below left] node {\small{$b|\epsilon,+1$}} (q);
		  \end{tikzpicture} 
 	 \end{minipage}\hfill
   \begin{minipage}[c]{0.6\linewidth}
  \begin{tikzpicture}[scale=0.7]
  \foreach \u/\utext in {0/{\vdash},1/a,2/a,3/b,4/a,5/b,6/b,7/{\dashv}}
          \node (u\u) at (\u+7,1) {$\utext$};
  \foreach \p in {0,1,2,3,4,5,6,7}
          \node (p\p) at (\p+7,0) {$1$};
  \foreach \s/\setat in {0/2,1/2,2/2,3/3,4/2,5/3,6/3,7/3}
          \node (s\s) at (\s+7,-1) {$\setat$};
  \foreach \t/\tetat in {1/3,2/3,3/2,4/3,5/2,6/2}
          \node (t\t) at (\t+7,-2) {$\tetat$};
  \node (debut) at (6,0) {};
  \node (fin) at (15,-1) {};
  \foreach \d/\a in {	debut/p0, p0/p1,  
                                          p3/s2,s0/t1,t1/t2,t2/s3, s3/p4,
                                          p5/s4,t3/t4, t4/s5,
                                          s5/p6,p6/t5, t5/s6, s6/p7, 
                                          p7/t6,t6/s7,s7/fin}
          \path[->] (\d) edge node {}(\a);
  \foreach \d/\a/\et in {	p1/p2/a, p2/p3/a, 
                                           s2/s1/b, s1/s0/b,
                                          p4/p5/a,s4/t3/b}
          \path[->] (\d) edge node [above] {{\small $\et$}}(\a);	
  \node (u) at (5.5, 1) {input $w$:};
  \node () at (5,-1) {run};
  \node () at (6.8,-3) {{output: $f(w)=aabbab$}};
  \end{tikzpicture}
  \end{minipage}
  \end{center}
  \caption{A transducer and its run over $w = aababb$}
  \label{fig:transducer}
\end{figure}

\begin{example}\label{exa:transducer}
  Let $A$ be the alphabet $\{a, b\}$.  Let us consider, as a running
  example, the function $f: A^*\to A^*$ which maps each word $w =
  a^{k_0}ba^{k_1}\cdots ba^{k_n}$ to the word $f(w) =
  a^{k_0}b^{k_0}a^{k_1}b^{k_1} \cdots a^{k_n}b^{k_n}$ obtained by adding
  after each block of consecutive~$a$ a block of consecutive~$b$ of the
  same length.  Since each word $w$ over~$A$ can be uniquely written $w =
  a^{k_0}ba^{k_1} \cdots ba^{k_n}$ with some $k_i$ being possibly equal to
  zero, the function~$f$ is well defined.  The word $w = aababb =
  a^2ba^1ba^0ba^0$ is mapped to $f(w) = a^2b^2a^1b^1a^0b^0a^0b^0 = aabbab$.

  This function is realized by the transducer depicted in
  Figure~\ref{fig:transducer}.  This transducer proceeds as follows to
  compute $f(w)$ from the input word~$w$.  While being in state~$1$ and
  moving forwards, it copies a block of consecutive~$a$ to the output.
  While in state~$2$ and moving backwards, the corresponding block of~$b$
  is written to the output.  While being in state~$3$, the transducer moves
  forwards writing nothing until it reaches the next block of
  consecutive~$a$.  Note that this function cannot be realized by a
  one-way transducer.
\end{example}

Formally, a two-way transducer is defined as follows:

\begin{definition}[Two-way transducer]
  A \emph{(deterministic) two-way transducer} $\mathcal{A}$ is a tuple
  $\mathcal{A}=(Q,A,B,\delta,\gamma, q_0,F))$ defined as follows:
\begin{itemize} \itemsep0cm
\item $Q$ is a finite \emph{state set}.
\item $A$ and $B$ are the \emph{input} and \emph{output alphabet}.
\item $\delta: Q\times (A\uplus\{{\vdash},{\dashv}\}) \to Q\times\{-1,0,+1\}$ is the \emph{transition function}. Contrary to the one-way machines, the transition function also outputs an integer, corresponding to the move of the reading head.
The alphabet is enriched with two new symbols $\vdash$ and $\dashv$, which are endmarkers that are added respectively at the beginning and the end of the input word, such that for all $q\in Q$, we have ${\delta(q,\vdash)\in Q\times\{0,+1\}}$ and $\delta(q,\dashv)\in Q\times\{-1,0\}$.
\item $\gamma: Q\times (A\uplus\{{\vdash},{\dashv}\}) \to B^*$ is the \emph{production function}.
\item $q_0\in Q$ is the \emph{initial state}.
\item $F \subseteq Q$ is the set of final states.
\end{itemize}
\end{definition}

The transducer~$\mathcal{A}$ processes 
finite words over $A$. If at state
$p$ the symbol $a$ is processed and $\delta(p,a) = (q,d)$, then
$\mathcal{A}$ moves to state $q$, moves the reading head to the left or
right depending on $d$, and outputs $\gamma(p,a)$.

Let $w = a_1 \cdots a_n$ be a fixed finite word over $A$ and $a_0 =
{\vdash}$ and $a_{n+1} = {\dashv}$.  Whenever $\delta(p,a_m) = (q,d)$ and
$\gamma(p,a_m) = v$, we write $(p,m) \trans{\isp|v} (q,n)$ where $n =
m+d$. We do not write the input over the arrow because it is always the
symbol below the reading head, namely, $a_m$.  In this notation, the pairs
represent the current configuration of a machine with the current state and
the current position of the input head.
A \emph{run} of the transducer over $w$ is a finite sequence of consecutive transitions
\begin{displaymath}
(p_0,m_0) \trans{\isp|v_1} (p_1,m_1)  \cdots (p_{n-1},m_{n-1}) \trans{\isp|v_n} (p_n,m_n)
\end{displaymath}
and we write $(p_0,m_0) \trans{\isp|v} (p_n,m_n)$ where $v = v_1 v_2 \cdots
v_n$.  We also refer to finite runs over words $w$ when all positions $m_i$
in the run but the last are between $1$ and~$|w|$.  The last position~$m_n$
is allowed to be between $0$ and $|w|+1$.  It is $0$ if the run leaves
$w$ on the left end and it is $|w|+1$ if it leaves $|w|$ on the right end.

A run $(p_0,m_0) \trans{\isp|v} (p_n,m_n)$ over a marked word ${\vdash}
u{\dashv}$ is \emph{accepting} if it starts at the first position in the
initial state and ends on the right endmarker $\dashv$ in a final state.
Then $v$ is the \emph{image} of $u$ by $\mathcal{A}$, denoted
$\mathcal{A}(u)=v$.

\subsection{Transition monoid}

In order to define a notion of aperiodicity for a transducer, we associate with
each two-way automaton a monoid called its \emph{transition monoid}.  A
transducer is then called \emph{aperiodic} if the transition monoid of its
input automaton is aperiodic.  Let us recall that a monoid is called
aperiodic if it contains no trivial group \cite{Almeida94}. 
Equivalently, a monoid $M$ is aperiodic if there exists a smallest integer $n$, called the \emph{aperiodicity index}, such that
for any element $x$ of $M$, we have $x^n=x^{n+1}$.
 Note first
that the transition monoid of a transducer is the transition monoid of its
input automaton and does not depend of its outputs.  Note also that our
definition is sound for either deterministic or non-deterministic
automata/transducers although we only use it for deterministic ones. Lastly, remark that it extends naturally the notion of transition monoid for one-way automata.

The transition monoid is, as usual, obtained by quotienting the free
monoid~$A^*$ by a congruence which captures the fact that two words have
the same \emph{behavior} in the automaton.  In an one-way
automaton~$\mathcal{A}$, the behavior of a word~$w$ is the set of pairs
$(p,q)$ of states such that there exists a run from~$p$ to~$q$
in~$\mathcal{A}$.  Two words are then considered equivalent if their
respective behaviors contain the same pairs of states.  In a two-way
automaton, the behavior of a word is also characterized by the runs it
contains but since the reading head can move both ways, the behavior is
split into four behaviors called left-to-left, left-to-right, right-to-left
and right-to-right behaviors.  We only define the left-to-left
behavior~$\bh_{\ell\ell}(w)$ of a word~$w$.  The three other behaviors
$\bh_{\ell r}(w)$, $\bh_{r\ell}(w)$ and~$\bh_{rr}(w)$ are defined analogously.

Let $\mathcal{A}$ be a two-way automaton. The \emph{left-to-left behavior}
$\bh_{\ell\ell}(w)$ of~$w$ in~$\mathcal{A}$ is the set of pairs $(p,q)$ such that
there exists a run which starts at the first position of~$w$ in state~$p$
and leaves $w$ on the left end in state~$q$ (see Figure~\ref{fig:ll-path}).

Before defining the transition monoid, we illustrate the notion of
behavior on the transducer depicted in Figure~\ref{fig:transducer}.
\begin{example}
  Consider the transducer depicted in Figure~\ref{fig:transducer} and the
  word $w = aab$.  From the run depicted in Figure~\ref{fig:transducer}, it
  can be inferred that
  \begin{alignat*}{2}
    \bh_{\ell\ell}(w) & = \{ (1,2), (2,2) \} & \qquad \bh_{r\ell}(w) & = \{ (1,2) \} \\
    \bh_{\ell r}(w) & = \{ (3,1) \} &  \qquad \bh_{rr}(w) & = \{ (2,3), (3,1) \}.
  \end{alignat*}
\end{example}

\begin{definition}[Transition monoid]
  Let $\mathcal{A}=(Q,A,\delta,q_0,F)$ be a two-way automaton.  The
  transition monoid of~$\mathcal{A}$ is $A^*/\!\!\sim_{\mathcal{A}}$ where
  $\sim_{\mathcal{A}}$ is the conjunction of the four relations $\LL$,
  $\LR$, $\RL$ and $\RR$ defined for any words $w$, $w'$ of $A^*$ as follows :
  \begin{itemize}
  \item $w \LL w'$ if $\bh_{\ell\ell}(w) = \bh_{\ell\ell}(w')$.
  \item $w \LR w'$ if $\bh_{\ell r}(w) = \bh_{\ell r}(w')$.
  \item $w \RL w'$ if $\bh_{r\ell}(w) = \bh_{r\ell}(w')$.
  \item $w \RR w'$ if $\bh_{rr}(w) = \bh_{rr}(w')$.
  \end{itemize}
  The neutral element of this monoid is the class of the empty
  word~$\epsilon$, whose behaviors $bh_{xy}(\epsilon)$ is the identity
  function if $x\neq y$, and is the empty relation otherwise.
\end{definition}

\begin{figure}[htb]
  \begin{center}
  \begin{tikzpicture}
  \draw[|-|] (0,1) -- (3,1);
  \node (u) at (1.5,1.5) {$w$};
  \node (q) at (0,0.6) {$p$};
  \node (p) at (-0.4,0) {$q$};
  \draw plot [smooth] coordinates { (0.1,0.6) (2,0.6) (1,0.4) (1.4,0.2) (0,0)};
  \draw[->] (0,0) -- (p);
  \end{tikzpicture}
  \end{center}
  \caption{A left-to-left behavior $(p,q)$ of a word $w$.}
  \label{fig:ll-path}
\end{figure}
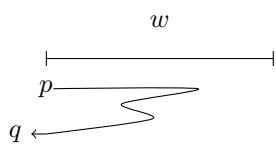

These relations are not new and were already evoked in~\cite{Pec85,Bir89}
for example.  Moreover, the left-to-left behavior was already introduced
in~\cite{She59} to prove the equivalence between one-way and two-way
automata.  

For a deterministic two-way
automaton, the four behaviors
$\bh_{\ell\ell}(w)$, $\bh_{\ell r}(w)$, $\bh_{r\ell}(w)$ and~$\bh_{rr}(w)$ are partial functions.
In the non-deterministic case, these four relations are not functions but relations
over the state set~$Q$ because there might exist several runs with the same
starting state and different ending states. 
 Furthermore, for deterministic automaton,
 the domains of the functions $\bh_{\ell\ell}(w)$ and $\bh_{\ell r}(w)$ (resp. $\bh_{r\ell}(w)$ and~$\bh_{rr}(w)$) are disjoint,
 since
there is a unique run starting in state~$p$ at the first (resp. last) position of~$w$.
Thus a run starting at the first (resp. last) position leaves $w$ either on the left or the right.
  For a deterministic two-way automaton, the four behaviors
$\bh_{\ell\ell}(w)$, $\bh_{\ell r}(w)$, $\bh_{r\ell}(w)$ and~$\bh_{rr}(w)$ can be seen
as a single partial function $f_w$ from $Q \times \{\ell,r\}$ to~$Q \times \{\ell,r\}$
where $f_w(p,x) = (q,y)$ whenever $(p,q) \in \bh_{xy}(w)$ for any $x,y \in
\{\ell,r\}$. 

\begin{lemma}\label{LemmaCongruence}
  Let $\mathcal{A}$ be a two-way transducer.  Then the relation
  $\sim_{\mathcal{A}}$ is a congruence of finite index.
\end{lemma}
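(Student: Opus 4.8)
The plan is to establish two things: that $\sim_{\mathcal{A}}$ is a congruence (that is, it is an equivalence relation compatible with concatenation), and that it has finite index. The finite-index part is immediate and I would dispatch it first: each of the four behaviors $\bh_{xy}(w)$ is a relation on the finite state set $Q$, hence a subset of $Q\times Q$, so there are at most $2^{|Q|^2}$ possible values for each behavior. Since $\sim_{\mathcal{A}}$ is the conjunction of the four behavior-equalities, the number of classes is bounded by $\left(2^{|Q|^2}\right)^4 = 2^{4|Q|^2}$, which is finite. That $\sim_{\mathcal{A}}$ is an equivalence relation is clear since it is a finite conjunction of equalities of sets, each of which is trivially reflexive, symmetric and transitive.

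The substantive part is compatibility with concatenation: I must show that if $w_1 \sim_{\mathcal{A}} w_1'$ and $w_2 \sim_{\mathcal{A}} w_2'$ then $w_1 w_2 \sim_{\mathcal{A}} w_1' w_2'$. By transitivity it suffices to prove the one-sided statement that $w_1 \sim_{\mathcal{A}} w_1'$ implies $w_1 w_2 \sim_{\mathcal{A}} w_1' w_2'$ and symmetrically on the right, or more cleanly, to prove that each of the four behaviors of a concatenation $w_1 w_2$ is determined by the four behaviors of $w_1$ and the four behaviors of $w_2$. The heart of the matter is therefore a \emph{composition formula}: I would write, for each pair $x,y \in \{\ell,r\}$, an expression for $\bh_{xy}(w_1 w_2)$ in terms of $\bh_{\ell\ell}(w_i)$, $\bh_{\ell r}(w_i)$, $\bh_{r\ell}(w_i)$, $\bh_{rr}(w_i)$ for $i = 1,2$. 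Once such a formula is in hand, compatibility follows at once: replacing $w_1$ by $\sim_{\mathcal{A}}$-equivalent $w_1'$ and $w_2$ by $w_2'$ leaves all the input behaviors unchanged, hence leaves every output behavior unchanged.

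To derive the composition formula I would analyze how a run contributing to a given behavior of $w_1 w_2$ decomposes at the boundary between $w_1$ and $w_2$. A run witnessing, say, a pair in $\bh_{\ell r}(w_1 w_2)$ enters $w_1$ at its left end, wanders back and forth, and must eventually exit on the right end of $w_2$; each time it crosses the boundary it does so in some state, and between consecutive boundary crossings it executes a run lying entirely within $w_1$ (contributing a pair in $\bh_{\ell\ell}$ or $\bh_{r\ell}$ of $w_1$, depending on which side it re-enters) or entirely within $w_2$ (contributing to $\bh_{\ell r}$ or $\bh_{\ell\ell}$ of $w_2$). Thus $\bh_{xy}(w_1 w_2)$ is obtained as a kind of relational product over all alternating sequences of crossings, i.e.\ a reflexive-transitive-closure of the boundary-crossing relation built from the component behaviors. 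I would record this as a finite (relational composition and star) expression over the eight input relations.

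The main obstacle is making the boundary-crossing analysis rigorous while taming the potentially unbounded number of crossings: a single run may cross the $w_1$/$w_2$ boundary arbitrarily many times, so the composition formula must involve an iteration (a star over the finite relation on $Q$ induced by the behaviors). The care needed is to argue that the \emph{set} of boundary states reachable, and hence the resulting pair of endpoint states, depends only on the component behaviors and not on the actual words; since the boundary-crossing relation is a relation on the finite set $Q$, its iteration stabilizes and is entirely determined by that relation. Handling the four cases $(x,y)\in\{\ell,r\}^2$ uniformly, and checking the degenerate contributions of the empty word (whose behaviors are the identity on the ``through'' directions and empty otherwise, as specified in the definition), completes the argument. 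Because the formula refers only to the behaviors of $w_1$ and $w_2$, the replacement of either factor by an equivalent word preserves all four behaviors of the product, which is exactly compatibility with concatenation.
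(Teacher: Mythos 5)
Your proposal is correct and follows essentially the same route as the paper: finite index is obtained by bounding each behavior by $2^{|Q|^2}$ subsets of $Q\times Q$, and compatibility with concatenation comes from composition formulas obtained by decomposing a run of $w_1w_2$ at the boundary, iterating the crossing relation with a star — exactly the paper's identity $\bh_{\ell r}(uv) = \bh_{\ell r}(u)\bigl(\bh_{\ell\ell}(v)\bh_{rr}(u)\bigr)^*\bh_{\ell r}(v)$. Your write-up is in fact more detailed than the paper's, which dismisses the congruence check as ``pure routine'' and only displays that one composition law as an example.
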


It is pure routine to check that $\sim_{\mathcal{A}}$ is indeed a
congruence.  It is of finite index since each of the four relations $\LL$,
$\LR$, $\RL$ and~$\RR$ has at most $2^{|Q|^2}$ classes.  Note that the
composition of the behaviors is not as straightforward as in the case of
one-way automata, the four relations being intertwined.  For example, the
composition law of the $\bh_{\ell r}$ relation is given by the equality
$\bh_{\ell r}(uv) = \bh_{\ell
  r}(u)\big(\bh_{\ell\ell}(v)\bh_{rr}(u)\big)^*\bh_{\ell r}(v)$ which
follows from the decomposition of a run in~$uv$.

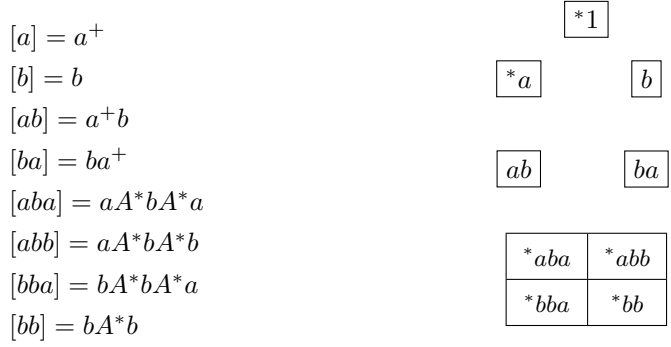
\begin{figure}
  \begin{center}
    \begin{minipage}[c]{0.5\linewidth}
      \begin{displaymath}
        \begin{array}{l}
          [a]=a^+    \\[0.6ex] 
          [b]= b     \\[0.6ex]
          [ab]=a^+b  \\[0.6ex] 
          [ba]=ba^+  \\[0.6ex]	
          [aba]=aA^*bA^*a \\[0.6ex] 
          [abb]=aA^*bA^*b \\[0.6ex]
          [bba]=bA^*bA^*a \\[0.6ex] 
          [bb]=bA^*b
        \end{array}
      \end{displaymath}
    \end{minipage}\hfill
    \begin{minipage}[c]{.5\linewidth}
    \begin{tikzpicture}
    \node[draw] (Neutre) {$^*1$};
    \node[draw,below left=1.2em of Neutre] (a) {$^*a$};
    \node[draw, below right=1.2em of Neutre] (b) {$b$};
    \node[draw, below= 2em of a] (ab) {$ab$};
    \node[draw, below= 2em of b] (ba) {$ba$};
    \node[below =7em of Neutre] (Jclass) {\usebox{\Jclass}};
    \end{tikzpicture}
    \end{minipage}
  \end{center}
  \caption{The equivalence classes of the transition monoid and its $\mathcal{D}$-class representation}
  \label{fig:monoid}
\end{figure}

\begin{example} \label{exa:monoid} 
  We illustrate the notion of a transition monoid by giving the one of the
  transducer depicted in Figure~\ref{fig:transducer}.  We have omitted all
  words containing one of the two endmarkers since these words cannot
  contribute to a group.  The eight classes of the
  congruence~$\sim_{\mathcal{A}}$ for the remaining words are given in
  Figure~\ref{fig:monoid} on the left.  The $\mathcal{D}$-class representation
  of this monoid is also given for the aware reader on the right.  It can
  be checked that this monoid is aperiodic.  The transducer of
  Figure~\ref{fig:transducer} is then aperiodic.
\end{example}

\subsection{$\FO$ graph transductions}

The $\MSO$-transductions defined by Courcelle \cite{Cou94} are a variant of
the classical logical interpretation of a relational structure into another
one.  Let us recall that a relational structure~$S$ has a
$\mathcal{L}$-interpretation, for some logic~$\mathcal{L}$, into a
structure~$T$ if it has an isomorphic copy in~$T$ defined by
$\mathcal{L}$-formulas.  More precisely, this means that there exists a
$\mathcal{L}$-formula~$\varphi_S$ with one first-order free variable and a
one-to-one correspondence~$f$ between the domain of~$S$ and the subset~$T'$
of elements of~$T$ satisfying~$\varphi_S$.  Furthermore, for each
relation~$R$ of~$S$ with arity~$r$, there exists a
$\mathcal{L}$-formula~$\varphi_R$ with $r$ first-order free variables such
that $R$ is isomorphic via~$f$ to the $r$-tuples of~$T'$
satisfying~$\varphi_R$.

A $\MSO$-transduction defines for each input structure a new structure
obtained by $\MSO$-interpretation into a fixed number of copies of the
input structure. In this case, the relations are the letter predicates and
the successor relation, which are of arity one and two respectively. To fit
into this framework, words are viewed as linear graphs.  Each word $w =
a_1\cdots a_n$ is viewed as a linear graph with $n$ vertices carrying the
symbols $a_1,\ldots,a_n$.  Linear means here that if the vertex set is
$\{1,2,\ldots,n\}$, the edge set is $\{ (k,k+1) : 1 \le k \le n-1\}$.

When restricted to linear graphs, the $\MSO$-transductions has been proved
to have the same expressive power as two-way transducers \cite{EH01}.  We
are interested in this article in $\FO$ graph transductions, the
restriction to first order formulas.  Since we consider transductions whose
domain is not the set of all graphs, there is an additional closed formula
$\varphi_{dom}$ which determines whether the given graph is in the domain
of the transduction.

\begin{figure}
\begin{tikzpicture}[scale=0.6]

\foreach \u/\utext in {1/a,3/a,5/b,7/a,9/b,11/b}
	\node[minimum width=2em,draw,shape=circle] (u\u) at (\u,1) {$\utext$};

\foreach \a/\b in {1/3,3/5,5/7,7/9,9/11}
\path[->] (u\a) edge   node   {}(u\b);

\node (u) at (-0.3,1) {$u$ :};
\node (tu) at (-0.5,-1.25) {$T(u)$:};
\node (c1) at (12.5, -0.5) {\footnotesize{copy $1$}};
\node (c2) at (12.5, -2) {\footnotesize{copy $2$}};

\foreach \u/\utext in {1/a,3/a,7/a}{
	\node[minimum width=2em,draw,shape=circle] (t1\u) at (\u,-0.5) {$\utext$};
	\node[minimum width=2em,draw,shape=circle] (t2\u) at (\u,-2) {$b$};
}
\path[->] (t11) edge node {} (t13);
\path[->] (t13) edge node {} (t21);
\path[->] (t21) edge node {} (t23);
\path[->] (t23) edge node {} (t17);
\path[->] (t17) edge node {} (t27);

\foreach \u/\utext in {5/b,9/b,11/b}{
	\node[minimum width=2em,draw,shape=circle,color=gray!70] (t1\u) at (\u,-0.5) { };
	\node[minimum width=2em,draw,shape=circle,color=gray!70] (t1\u) at (\u,-2) { };
	 }
\end{tikzpicture}
\caption{The linear graph of $u=aababb$ and the output structure of $T$ over $u$.}\label{Ex-onLogic}
\end{figure}
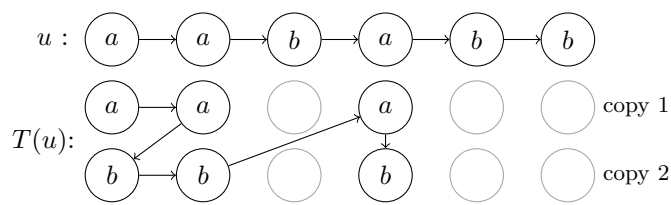

Before giving the formal definition, we give below an example of a
$\FO$-transduction. Note that when considering $\FO$ transductions, the successor relation is replaced by the order relation.

\begin{example}\label{exa:FO-tranduction}
We give here a $\FO$ graph transduction that realizes the function $f$
introduced in Example~\ref{exa:transducer}.
So let $T=(A,A,\varphi_{dom},C,\varphi
_{pos},\varphi_\leq)$ be the $\FO$ graph transduction defined as follows :
\begin{itemize}
\item $A=\{a,b\}$ is both the input and output alphabet,
\item $C=\{1,2\}$,
\item $\varphi_{dom}$ is a $\FO$ formula stating that the input is a linear graph,
\item $\varphi_a^1(x)=\varphi_b^2(x)=\mathbf{a}(x)$, the other position formulas being set as $false$,
\item the order formulas are defined now :
		\begin{itemize}
		\item $\varphi_\leq^{i,i}(x,y)=x\leq y$ for $i=1,2$,
		\item $\varphi_\leq^{1,2}(x,y)=x\leq y \vee (\forall z\ y\leq z\leq x \to \mathbf{a}(z))$,
		\item $\varphi_\leq^{2,1}(x,y)=\exists z\ x\leq z\leq y \wedge \mathbf{b}(z)$.
		\end{itemize}
\end{itemize}
\end{example}

\begin{definition}
A $\FO$-graph transduction is a tuple $T=(A,B,\varphi_{dom},C,\varphi_{pos},\varphi_\leq)$ defined as follows:
\begin{itemize}
\item $A$ is the \emph{input alphabet}.
\item $B$ is the \emph{output alphabet}.
\item $\varphi_{dom}$ is the \emph{domain formula}. A graph is accepted as input if it satisfies the domain formula.
\item $C$ is a finite set, denoting the copies of the input that can exist in the output.
\item $\varphi_{pos}$ is a set of formulas with one free variable $\varphi_b^c(x)$, for $b\in B$ and $c\in C$. Given $c$, the formulas $\varphi_b^c(x)$, for $b\in B$, are mutually exclusive.
The $c$ copy of a node $i$ is labelled by $b$ if, and only if, the formula $\varphi_b^c(x/i)$ is true.
\item $\varphi_\leq$ is a set of formulas with two free variables $\varphi_\leq^{c,c'}(x,y)$, for $c,c'\in C$. There exists a path from the $c$ copy of a node $i$ to the $c'$ copy of a node $j$ if, and only if, the formula $\varphi_\leq^{c,c'}(x/i,y/j)$ is true.
\end{itemize}
All formulas are required to be in $\FO[<]$ and are evaluated on the input graph.

The output graph is defined as a substructure of the $C$ copies of the input linear graph, in which a node exists if it satisfies one position formula, and is labelled accordingly, and the order is defined according to the order formulas.
\end{definition}

In this article, we are only interested in linear graph transductions,
which only accept words seen as linear graphs as input.  An input word has
an \emph{image} by a $\FO$ graph transduction if the associated linear graph
satisfies its domain formula and the order relation of the output graph,
defined by the order formulas, defines a linear graph corresponding to a
word. 
If one condition fails, then the function is undefined on the given input.
 One should note that the fact that a graph is linear and corresponds
to a word is $\FO$-definable.

In Figure~\ref{Ex-onLogic}, we give the output structure of $T$ over the linear graph $u=aababb$. Note that for the sake of readability, we do not draw the whole order relation, but simply the successor relation.

\section{Main result} \label{sec:main}

We are now ready to state the main result of this article, as an extension
of the result by McNaughton and Papert \cite{MP71} and Schützenberger
\cite{Schutzenberger65} in the context of two-way transducers and $\MSO$
transductions established by Engelfriet and Hoogeboom \cite{EH01}.

\begin{theorem}\label{THM-main}
  The functions realized by aperiodic two-way transducers are exactly the
  functions realized by $\FO$ graph transductions over words.
\end{theorem}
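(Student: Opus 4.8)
The plan is to prove the two inclusions separately, since the theorem is an equivalence between two classes of functions. I would first tackle the direction from aperiodic two-way transducers to $\FO$-transductions, and then the converse.

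\medskip

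\noindent\textbf{From aperiodic transducers to $\FO$-transductions.} The key idea is to mirror the Engelfriet--Hoogeboom construction but replace the $\MSO$ interpretation by an $\FO$ one, using aperiodicity to guarantee $\FO$-definability. Given an aperiodic two-way transducer $\cA$, a run over a word $w$ visits each position possibly several times, and the total output is obtained by concatenating the productions $\gamma$ along the run. The first step is to bound the number of visits to each position and describe, for each position $i$, the sequence of states in which the transducer visits $i$; in a deterministic two-way machine each such visit corresponds to one copy in the output structure, so the copy set $C$ is naturally indexed by pairs $(p,x)$ of states and a side. The crucial point is that the next configuration in a run, and more generally which copy of which position precedes which in the output order, is governed by the behaviours $\bh_{\ell\ell},\bh_{\ell r},\bh_{r\ell},\bh_{rr}$ of the prefixes and suffixes of $w$. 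These behaviours take values in the transition monoid, which by hypothesis is aperiodic; by the Sch\"utzenberger--McNaughton--Papert theorem, the languages obtained as preimages of monoid elements (equivalently, the predicates ``the prefix/suffix up to position $i$/$j$ has behaviour $m$'') are $\FO[<]$-definable. From these building blocks I would write the position formulas $\varphi^c_b(x)$ (determining that the copy $c$ at position $x$ outputs letter $b$) and the order formulas $\varphi^{c,c'}_\leq(x,y)$, expressing that copy $c$ of $x$ is produced before copy $c'$ of $y$ along the run. The latter is where the $\FO[<]$ signature with the order, rather than the successor, is essential, as emphasized in the introduction.

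\medskip

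\noindent\textbf{From $\FO$-transductions to aperiodic transducers.} For the converse, the natural route is to exploit the already-known equivalence of Engelfriet and Hoogeboom: an $\FO$-transduction is in particular an $\MSO$-transduction, hence realized by some two-way transducer; the work is to show one can take that transducer \emph{aperiodic}. Rather than re-run their construction while tracking aperiodicity by hand, I would structure this as a composition argument. The idea is to factor the $\FO$-transduction as a composition of simpler, manifestly aperiodic, building blocks: for instance an aperiodic relabelling or marking pass that annotates each position with the $\FO$-types of its prefix and suffix (these types are finite and $\FO$-definable, hence computable by an aperiodic one-way, a fortiori two-way, transducer), followed by a transducer that uses these annotations to lay out the copies in the correct order and emit the output letters. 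Closure of aperiodic two-way transductions under composition, referenced as a main ingredient in Section~\ref{sec:composition}, is exactly what lets these passes be combined into a single aperiodic transducer.

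\medskip

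\noindent\textbf{Main obstacle.} The hardest part is establishing the $\FO$-definability of the output \emph{order} in the first direction. While membership of a prefix in a given class of the aperiodic transition monoid is $\FO$-definable by Sch\"utzenberger's theorem, the order formula must compare two copies of two arbitrary positions, and the relative order along the run is determined by an unbounded alternation of left-to-right and right-to-left traversals, encoded by the starred composition law $\bh_{\ell r}(uv)=\bh_{\ell r}(u)\bigl(\bh_{\ell\ell}(v)\bh_{rr}(u)\bigr)^*\bh_{\ell r}(v)$. Showing that this a priori unbounded iteration collapses to an $\FO$-expressible condition is precisely where aperiodicity must be used: the idempotent power stabilizes, so the Kleene star over the monoid becomes a bounded, hence $\FO$-definable, quantity. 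Carefully extracting an $\FO[<]$ formula from this stabilization, and verifying it defines a genuine linear order on the output (so that $\varphi_{dom}$ and the linearity check are met), is the technical core of the argument.
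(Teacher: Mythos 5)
Your first direction is essentially the paper's own argument: the copy set is (up to the extra ``side'' index) the state set, the position formulas assert reachability from the initial configuration plus the value of $\gamma$, and the order formulas reduce to an $\FO$-definable reachability predicate $\varphi^{q,q'}(x,y)$ between configurations. The paper proves that predicate (Lemma~\ref{2w-FOT:Order-formula}) by cutting the word into the three factors $u[1,i-1]$, $u[i,j]$, $u[j+1,|u|]$ and observing that reachability depends only on the $\sim_{\cA}$-classes of these factors, each of which is $\FO[<]$-definable with guarded quantifiers by Lemma~\ref{2w-FOT:locallanguages}; your discussion of collapsing the starred composition law is really the same point seen from the monoid side, since the three behaviours already package the unbounded alternation. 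So this half is fine.

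The gap is in the converse direction. You build the transducer as a composition of aperiodic marking passes with a final machine that ``uses these annotations to lay out the copies in the correct order,'' and you call the building blocks ``manifestly aperiodic.'' The marking passes are indeed aperiodic (star-free tests give aperiodic one-way transducers, and Theorem~\ref{Comp-Thm} lets you compose them in), but the final navigating transducer is precisely the component whose aperiodicity is \emph{not} manifest and is never argued. Two things are missing. First, that machine's head must travel from the copy of one position to the copy of its successor in the output order, which may be arbitrarily far away and in either direction; one needs a notion of aperiodicity that makes sense for such ``jumping'' machines and a proof that simulating the jumps by step-by-step walks preserves it --- this is what the paper's context-path characterization and Theorems~\ref{FOT-2w:2wFOto2wSF}--\ref{FOT-2w:2wSFto2w} supply. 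Second, and more fundamentally, the reason the navigating machine is aperiodic at all is that the $\FO$ formulas of the transduction have an aperiodicity index (Lemma~\ref{FOT-2w:ParcFOT}): on $vu^nw$ versus $vu^{n+1}w$ the order and node formulas agree on positions of $v$ and $w$, so the traces of the run outside the pumped block coincide. Without an argument of this kind your construction yields a correct two-way transducer, but nothing forces its transition monoid to be aperiodic, and the statement you are proving is exactly about that.
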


The theorem is proved in Sections~\ref{sec:2WtoFOT}
and~\ref{sec:FOTtoA2W}.  The first inclusion relies on
Theorem~\ref{2w-FOT:Result}, while the second inclusion stems from the
conjunction of Theorems~\ref{FOT-2w:FOTto2wFO},~\ref{FOT-2w:2wFOto2wSF}
and~\ref{FOT-2w:2wSFto2w}.
The next Section is devoted to the composition of transducers, which is a key tool of the proof.

\section{Composition of transducers}\label{sec:composition}

As transducers realize functions over words, the natural question of the
compositionality occurs.  In a generic way, this question is : given two
functions realized by some machine, can we construct a machine that
realizes the composition of these functions.  This question has been
considered in~\cite{HU67} for generic machines, and resolved positively in
the case of deterministic two-way transducers in~\cite{CJ77}.

This result can also be obtained using the equivalence of two-way
transducers with $\MSO$ transductions, since these are easily proved to be
stable by composition (see \cite{Cou94}).  However, the reduction from
$\MSO$ transductions to two-way transducers established in~\cite{EH01}
makes an extensive use of a weaker version of this result, which is that
the composition of a one-way deterministic, called sequential in the following, transducer with a two-way transducer can be
done by a two-way transducer, which was first proved in~\cite{AHU69}.

In this section, we follow this approach, and now prove that this result
holds for aperiodic transducers, in the sense that if the two input
transducers are aperiodic, then we can construct an aperiodic transducer
realizing the composition.

\begin{theorem}\label{Comp-Thm}
  Let $\mathcal{A}$ be a sequential transducer that can be
  composed with a two-way transducer $\mathcal{B}$, both deterministic and aperiodic. Then we can
  effectively construct an aperiodic and deterministic two-way transducer $\mathcal{C}$ such
  that $\mathcal{C}=\mathcal{B}\circ\mathcal{A}$.
\end{theorem}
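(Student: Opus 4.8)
The plan is to take the classical construction of Aho, Hopcroft and Ullman~\cite{AHU69}, which realizes the precomposition of a two-way transducer by a sequential one as a two-way transducer, and to show that when $\mathcal{A}$ and $\mathcal{B}$ are aperiodic the transition monoid of the resulting machine $\mathcal{C}$ is aperiodic. Write $\mathcal{A} = (P,A,B,\alpha,\beta,p_0,F_A)$ for the sequential transducer, so that on input $w = a_1 \cdots a_n$ it runs through states $p_0,p_1,\dots,p_n$ with $p_i = \alpha(p_{i-1},a_i)$ and produces the output $u = \mathcal{A}(w) = u^{(1)} \cdots u^{(n)}$, where the block $u^{(i)} = \beta(p_{i-1},a_i)$ has bounded length. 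The transducer $\mathcal{C}$ simulates $\mathcal{B}$ on the virtual tape $u$: a position of $\mathcal{B}$ inside $u$ is encoded by the pair made of the position $i$ of $w$ whose block $u^{(i)}$ contains it, together with an offset inside that block, and the finite control of $\mathcal{C}$ carries the current state of $\mathcal{B}$, this offset, and the state of $\mathcal{A}$ needed to reconstruct $u^{(i)}$. Moves of $\mathcal{B}$ staying inside a block are simulated in place, while moves crossing a block boundary make the head of $\mathcal{C}$ travel to the adjacent position of $w$, skipping empty blocks.

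First I would isolate the one genuinely delicate operational point, which is also where aperiodicity is most at risk: when the head of $\mathcal{B}$ leaves a block on its left, the head of $\mathcal{C}$ must move left in $w$ and recover the state of $\mathcal{A}$ entering the previous block, an information that cannot be obtained by running the deterministic one-way machine $\mathcal{A}$ backwards. As in~\cite{AHU69}, this is resolved by a leftward re-scan driven by $\mathcal{A}$; the point to check is that this recovery is governed entirely by the deterministic dynamics of $\mathcal{A}$ and introduces no counting, so that it contributes only the aperiodic transition monoid of $\mathcal{A}$ and never a nontrivial group.

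The core of the argument is the analysis of the transition monoid $M_{\mathcal{C}}$ of the input automaton of $\mathcal{C}$. For a factor $v \in A^*$ and an entering state $p \in P$, let $\alpha^*(p,v)$ be the state reached and $\beta^*(p,v) \in B^*$ the word output by $\mathcal{A}$ while reading $v$ from $p$. Denoting by $M_A$ and $M_B$ the transition monoids of $\mathcal{A}$ and $\mathcal{B}$, I would set
\begin{displaymath}
  \Phi(v) = \big(\, [v]_{M_A},\ p \mapsto [\beta^*(p,v)]_{M_B} \,\big),
\end{displaymath}
an element of the wreath product $M_B \wr M_A$, where $M_A$ acts on $P$. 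The decomposition of a run of $\mathcal{A}$ over a product $vv'$ gives $[\beta^*(p,vv')]_{M_B} = [\beta^*(p,v)]_{M_B}\,[\beta^*(\alpha^*(p,v),v')]_{M_B}$, which is exactly the wreath-product multiplication, so $\Phi$ is a morphism from $A^*$ into $M_B \wr M_A$. I would then prove that $\Phi(v)=\Phi(v')$ implies $v \sim_{\mathcal{C}} v'$, so that $M_{\mathcal{C}}$ is a quotient of the submonoid $\Phi(A^*)$. The verification rests on the fact that at a boundary of $v$ the offset of $\mathcal{B}$ is pinned to an extremity of a block, so each of the four behaviors $\bh_{\ell\ell},\bh_{\ell r},\bh_{r\ell},\bh_{rr}$ of $\mathcal{C}$ over $v$ is determined by how $\mathcal{B}$ crosses the output region $\beta^*(p,v)$, i.e. by $[\beta^*(p,v)]_{M_B}$, together with the state transformation $[v]_{M_A}$; the internal trajectories may differ between $v$ and $v'$, but a behavior records only the net entry/exit relation, which depends solely on $\Phi(v)$.

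It remains to conclude that $M_B \wr M_A$ is aperiodic, which is the standard closure of the variety of aperiodic monoids under wreath product~\cite{Almeida94}: the projection $M_B \wr M_A \to M_A$ sends any subgroup onto a subgroup of the aperiodic monoid $M_A$, hence onto the trivial group, so the subgroup lies in the submonoid $(M_B)^P$, which is aperiodic as a finite power of an aperiodic monoid; thus every subgroup is trivial. Since aperiodic monoids are closed under submonoids and quotients, $M_{\mathcal{C}}$ is aperiodic and $\mathcal{C}$ is an aperiodic two-way transducer. I expect the main obstacle to be the second paragraph: making the leftward state-recovery of $\mathcal{A}$ precise without smuggling in any modular counting, so that the clean morphism $\Phi$ into the wreath product faithfully captures the behaviors of $\mathcal{C}$.
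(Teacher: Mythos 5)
Your overall strategy --- reuse the Aho--Hopcroft--Ullman composition construction and control the transition monoid of $\mathcal{C}$ by a division of the wreath product $M_B \wr M_A$, then invoke closure of aperiodic monoids under wreath product --- is attractive, and you correctly identify the delicate point (the leftward recovery of the state of $\mathcal{A}$). However, the pivotal claim, that $\Phi(v)=\Phi(v')$ implies $v\sim_{\mathcal{C}} v'$, is false for this construction, and your proposal gives no argument for it beyond the assertion that behaviors ``record only the net entry/exit relation''. The states of $\mathcal{C}$ are not abstract: they carry a buffer holding the pending output block of $\mathcal{A}$, a relation over $Q\times Q$ during the backward recovery, and a pair of tracked states during the forward re-positioning. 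These data at the boundary of $v$ are not functions of $[v]_{M_A}$ and $p\mapsto[\beta^*(p,v)]_{M_B}$. A minimal counterexample: let $\mathcal{A}$ have a single state with outputs $\beta(p,a)=x$ and $\beta(p,b)=y$ with $x\neq y$, and let $M_B$ identify $x$ and $y$; then $\Phi(a)=\Phi(b)$, but after reading $a$ or $b$ the buffer of $\mathcal{C}$ contains $x$ or $y$ respectively, so the exit states differ and $a\not\sim_{\mathcal{C}} b$. More seriously, for a right-to-left run the positions inside $v$ at which the candidate set collapses and at which the two tracked runs of $\mathcal{A}$ merge depend on the transition functions of the suffixes of $v$, not only on $[v]_{M_A}$; these positions determine where the simulation of $\mathcal{B}$ resumes and hence which exit state is reached. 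So $M_{\mathcal{C}}$ need not divide $M_B\wr M_A$, and the wreath-product conclusion does not apply as stated.

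The gap is repairable, but the repair is essentially the paper's proof: rather than comparing arbitrary $v,v'$ with $\Phi(v)=\Phi(v')$, the paper only compares $u^n$ with $u^{n+1}$ (for which the boundary buffer, the last letter, and the relative positions of disambiguation and collision automatically agree far from the borders) and shows directly, by a case analysis on the three working modes of $\mathcal{C}$, that the four behaviors stabilize with index about $n_{\mathcal{A}}+n_{\mathcal{B}}+2$: after $n_{\mathcal{A}}$ iterations the underlying run of $\mathcal{A}$ enters each copy of $u$ in the same state and produces the same block $v$, after $n_{\mathcal{B}}$ further iterations the run of $\mathcal{B}$ on $v^{n_{\mathcal{B}}}$ stabilizes, and the $\LL$ and $\RR$ cases are reduced to $\LR$ and $\RL$ by peeling off one iteration of $u$ on each side. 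If you want to keep the algebraic packaging, you would at least have to enrich $\Phi$ with the boundary data (last letter, offset in its output block, and enough information about where the backward scan disambiguates) and prove that the enriched invariant still generates an aperiodic monoid --- which is exactly the content of the paper's direct case analysis.
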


\section{From aperiodic two-way transducers to $\FO$ transductions}\label{sec:2WtoFOT}

Let us consider a deterministic and aperiodic two-way transducer.
We aim to construct a first-order graph transduction that realizes the same function.

In order to do that, we need to define a formula $\varphi_{dom}$ for the input domain, formulas $\varphi_{pos}$ for each copies of a position and each output letter of $\mathcal{A}$, and, contrary to the generic case of $\MSO$ graph transductions where only the successor is defined, we need here to define order formulas $\varphi_\leq$ that describe the order relation on the output depending on the copies of the nodes from the input.

The following result simply stems from the equivalence of aperiodic monoids and first order logic established in~\cite{Schutzenberger65,MP71}, but is an essential step to link aperiodicity to first-order, as it is used in the next theorem, which proves that the order relation between positions is first-order definable.

\begin{lemma}\label{2w-FOT:locallanguages}
Let $\mathcal{A}=(Q,A,\delta)$ be an aperiodic two-way automaton.
Then the relation classes of $\LL$, $\LR$, $\RL$, $\RR$  and consequently $\sim_\mathcal{A}$ of $\mathcal{A}$ are $\FO$-definable.
\end{lemma}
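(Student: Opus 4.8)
The plan is to reduce the $\FO$-definability of each behavior relation class to the classical Schützenberger--McNaughton--Papert theorem, which states that a language is $\FO[<]$-definable if and only if its syntactic monoid is aperiodic. The key observation is that each behavior class is itself a regular language whose recognition is controlled by the (aperiodic) transition monoid of $\mathcal{A}$, so I expect the entire argument to rest on showing that these classes are recognized by an aperiodic monoid.

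First I would make precise what ``the relation classes of $\LL$'' means as a language. For a fixed behavior value $R \subseteq Q \times Q$, the set $L_R^{\ell\ell} = \{ w \in A^* : \bh_{\ell\ell}(w) = R \}$ is a union of $\sim_{\mathcal{A}}$-classes, and similarly for the other three relations; a $\sim_{\mathcal{A}}$-class is then an intersection $L_{R_1}^{\ell\ell} \cap L_{R_2}^{\ell r} \cap L_{R_3}^{r\ell} \cap L_{R_4}^{rr}$. By Lemma~\ref{LemmaCongruence}, $\sim_{\mathcal{A}}$ is a congruence of finite index, so each such class is a regular language, and the quotient $A^*/\!\!\sim_{\mathcal{A}}$ is precisely the (finite) transition monoid. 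Since $\mathcal{A}$ is aperiodic, this monoid is aperiodic by definition.

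The central step is then to observe that each behavior language $L_R^{xy}$ is recognized by the transition monoid $M = A^*/\!\!\sim_{\mathcal{A}}$: indeed, by construction $\bh_{xy}(w)$ depends only on the $\sim_{\mathcal{A}}$-class of $w$, so $L_R^{xy}$ is saturated by $\sim_{\mathcal{A}}$, i.e. it is the preimage under the canonical morphism $A^* \to M$ of a subset of $M$. A language recognized by an aperiodic monoid has an aperiodic syntactic monoid (the syntactic monoid is a quotient of any recognizing monoid, and quotients of aperiodic monoids are aperiodic). Applying the Schützenberger--McNaughton--Papert equivalence \cite{Schutzenberger65,MP71} to each $L_R^{xy}$ yields an $\FO[<]$ formula defining it. Finally, since each $\sim_{\mathcal{A}}$-class is a finite Boolean combination (intersection) of the $L_R^{xy}$, and $\FO[<]$ is closed under Boolean connectives, every $\sim_{\mathcal{A}}$-class is $\FO$-definable as well.

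I expect the main subtlety to be bookkeeping rather than mathematical depth: one must verify carefully that each $\bh_{xy}$ factors through the congruence $\sim_{\mathcal{A}}$ (which is immediate from the definition of the congruence, since $\LL$, $\LR$, $\RL$, $\RR$ are by definition exactly the relations equating these behaviors) and that the relevant monoid morphism is well-defined. The one genuine point to state cleanly is the transfer of aperiodicity to the syntactic monoid, relying on the fact that the syntactic monoid of a language divides any monoid recognizing it and that aperiodicity is preserved under division. No delicate two-way run analysis is needed here, because all the combinatorics of runs has already been absorbed into the definition of the transition monoid; this lemma is purely the algebra-to-logic translation.
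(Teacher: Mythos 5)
Your proof is correct and follows exactly the route the paper intends: the paper gives no detailed argument for this lemma, stating only that it ``simply stems from the equivalence of aperiodic monoids and first order logic'' of Sch\"utzenberger and McNaughton--Papert, and your writeup is precisely that argument made explicit (each behavior class is saturated by $\sim_{\mathcal{A}}$, hence recognized by the aperiodic transition monoid, hence $\FO[<]$-definable, with Boolean closure handling the intersections). No discrepancy to report.
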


\begin{lemma}\label{2w-FOT:Order-formula}
Let $\mathcal{A}$ be an aperiodic two-way automaton.
Then for any pair of states $q$ and $q'$ of $\mathcal{A}$, there exists a $\FO$-formula 
$\varphi^{q,q'}(x,y)$ such that for any word $u$ in the domain of $\mathcal{A}$ and any pair of positions $i$ and $j$ of $u$, 
$$u\models \varphi^{q,q'}(x/i,y/j)$$
if, and only if, the run of $\mathcal{A}$ over $u$ starting at position $i$ in state $q$ eventually reaches the position $j$ in state $q'$.

\end{lemma}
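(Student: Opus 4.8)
The plan is to build the formula $\varphi^{q,q'}(x,y)$ by tracing how the run of $\mathcal{A}$ starting at position $i$ in state $q$ can reach position $j$ in state $q'$, expressing each possible ``shape'' of such a run in terms of the behavior relations $\bh_{\ell\ell}, \bh_{\ell r}, \bh_{r\ell}, \bh_{rr}$, which are already known to be $\FO$-definable by Lemma~\ref{2w-FOT:locallanguages}. The key observation is that whether the run from $(q,i)$ reaches $(q',j)$ depends only on the behaviors of the three factors of $u$ determined by the cut points $i$ and $j$: the prefix strictly before the smaller position, the infix between the two positions, and the suffix strictly after the larger one. Since the behaviors of these factors are determined by $\FO$-definable equivalence classes, and there are only finitely many classes, I can enumerate over them.

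\medskip

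First I would split into the two cases $i \le j$ and $i > j$, which are symmetric in spirit but not identical because the run moves through the word in a direction-sensitive way. Consider the case $i \le j$ and write $u = u_1 u_2 u_3$ where $u_1$ is the factor on positions $< i$, $u_2$ is the factor on positions in $[i,j)$, and $u_3$ is the factor on positions $\ge j$. The run starting at the left end of $u_2$ (position $i$) in state $q$ will, before ever reaching position $j$, perform some number of excursions into $u_1$ on the left and back, described by compositions of $\bh_{\ell\ell}(u_2)$ with $\bh_{rr}(u_1)$, until it finally crosses from $u_2$ into $u_3$ entering $u_3$'s leftmost position, i.e. position $j$, in state $q'$. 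Exactly which state it is in when it first arrives at position $j$ is computed by the composition law for $\bh_{\ell r}$ restricted to the run over $u_1 u_2$: this is governed by an expression of the same form as the one displayed after Lemma~\ref{LemmaCongruence}, namely a star of $\bh_{\ell\ell}(u_2)\bh_{rr}(u_1)$ applied to the incoming state. Since $\mathcal{A}$ is deterministic, each behavior is a partial function, so ``$(q,i)$ reaches $(q',j)$ as the first visit to position $j$'' becomes a concrete finite condition on the behavior-classes of $u_1$ and $u_2$.

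\medskip

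The formula is then assembled as a finite disjunction over the finitely many combinations of behavior classes $(\beta_1,\beta_2,\beta_3)$ for $(u_1,u_2,u_3)$ that make $(q,i)$ reach $(q',j)$. For each such triple I would conjoin three $\FO$-formulas obtained from Lemma~\ref{2w-FOT:locallanguages}: one asserting that the prefix $\{z : z < x\}$ has behavior class $\beta_1$, one asserting the infix $\{z : x \le z < y\}$ has class $\beta_2$, and one for the suffix. Each behavior-class formula, which Lemma~\ref{2w-FOT:locallanguages} provides for a full word, must be relativized to a definable sub-interval determined by $x$ and $y$; this relativization is routine since intervals are $\FO[<]$-definable, and I would note that the class of the empty factor (when $i=j$, or $i$ is the first position) is handled by the neutral-element convention from the transition monoid definition.

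\medskip

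I expect the main obstacle to be bookkeeping the direction-sensitivity correctly: the composition of the four intertwined behaviors across the cut at $x$ and at $y$ must be stated so that it captures precisely the \emph{first} arrival of the run at position $j$ in state $q'$, and one must be careful that excursions of the run into $u_3$ and back, or into $u_1$ and back, are accounted for without double-counting or overshooting $j$. Getting the star-expressions for the relevant behaviors right across two cut points, rather than the single cut of the displayed $\bh_{\ell r}(uv)$ law, is the delicate part; once that combinatorial analysis is fixed, the passage to $\FO$ is immediate because everything reduces to a Boolean combination of the $\FO$-definable class predicates of the three factors.
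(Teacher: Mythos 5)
Your proposal follows essentially the same route as the paper's proof: decompose $u$ into the three factors determined by the cut points $i$ and $j$, observe that the set of states in which the run from $(q,i)$ visits position $j$ is determined by the behavior classes of these factors, invoke Lemma~\ref{2w-FOT:locallanguages} for $\FO$-definability of those classes, relativize the class formulas to the intervals defined by $x$ and $y$, and take a finite disjunction over the compatible triples of classes. The concerns you flag about tracking repeated crossings between $i$ and $j$ are handled in the paper by exactly the composition-of-behaviors argument you sketch, so there is no gap.
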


We now state the main result of this section and construct the first-order transduction that realizes $\mathcal{A}$.

\begin{theorem}\label{2w-FOT:Result}
Let $\mathcal{A}$ be an aperiodic two-way transducer.
Then we can effectively construct a $\FO$-graph transduction that realizes the same function as $\mathcal{A}$.
\end{theorem}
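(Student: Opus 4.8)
The plan is to assemble the pieces developed in this section into a concrete $\FO$-graph transduction $T=(A,B,\varphi_{dom},C,\varphi_{pos},\varphi_\leq)$ realizing the same function as the given aperiodic two-way transducer $\mathcal{A}=(Q,A,B,\delta,\gamma,q_0,F)$. The guiding idea is that the output of $\mathcal{A}$ is produced transition by transition along its unique accepting run, so each output position should correspond to a pair consisting of an input position $i$ together with the state $q$ in which $\mathcal{A}$ visits $i$ during the run. Since $\mathcal{A}$ is deterministic, a position $i$ may be visited several times but in distinct states, and the number of such visits is bounded by $|Q|$. I would therefore take the copy set to be $C=Q$ (or a suitable subset of state-like labels), so that the $q$-copy of input node $i$ stands for "the moment of the run at which $\mathcal{A}$ is at position $i$ in state $q$." The letter produced at that moment is $\gamma(q,a_i)$, which is a fixed word over $B$; to respect the requirement that each copy carries a single output symbol, I would refine $C$ to pairs $(q,k)$ selecting the $k$-th letter of $\gamma(q,a_i)$, a standard bounded padding that I will not belabor.

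Building $\varphi_{pos}$ is then routine first-order bookkeeping. For the copy $c=(q,k)$ the position formula $\varphi_b^{c}(x)$ must assert that the run actually visits $x$ in state $q$ and that the $k$-th letter of the output $\gamma(q,a_x)$ equals $b$. The first conjunct is exactly the reachability statement "$\mathcal{A}$ started at position $1$ in state $q_0$ eventually reaches $x$ in state $q$," which is $\FO$-definable by Lemma~\ref{2w-FOT:Order-formula} applied with the fixed source $(q_0,1)$; the dependence of $\gamma(q,a_x)$ on the letter $a_x$ is handled by a finite case split over $a_x\in A$ using the symbol predicates $\mathbf{a}(x)$. Mutual exclusivity for fixed $c$ holds because a visit-in-state-$q$ determines $a_x$ and hence the unique output letter at slot $k$.

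The core of the argument is the construction of the order formulas $\varphi_\leq^{c,c'}$, and this is where I expect the real work to lie. The output letters are linearly ordered by the chronological order in which their producing transitions fire along the accepting run. So I must express, in $\FO[<]$ over the input word, that the visit of $x$ in state $q$ occurs no later in the run than the visit of $y$ in state $q'$. This is precisely the content of Lemma~\ref{2w-FOT:Order-formula}: the run, once it is at $x$ in state $q$, eventually reaches $y$ in state $q'$ iff the formula $\varphi^{q,q'}(x,y)$ holds. Setting $\varphi_\leq^{(q,\cdot),(q',\cdot)}(x,y)$ to $\varphi^{q,q'}(x/x,y/y)$ (with the slot indices $k,k'$ breaking ties within a single transition's output word in the obvious fixed way) encodes exactly the chronological precedence, since by determinism the run is a single linear sequence of configurations and "eventually reaches" is exactly "occurs later in time." The main obstacle is verifying that this yields a genuine linear order on the surviving nodes and that it matches $\mathcal{A}$'s output: I must check totality (any two visited configurations are comparable, which follows because the run is one linear chain), antisymmetry and transitivity (inherited from chronological order, but requiring care where $x=y$ or where configurations coincide), and that the induced word is exactly $\mathcal{A}(u)$.

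Finally I would set $\varphi_{dom}$ to assert both that the input is a linear graph and that $u$ lies in the domain of $\mathcal{A}$, i.e.\ that the run from $(q_0,1)$ reaches the right endmarker in a final state; by Lemma~\ref{2w-FOT:locallanguages} membership-type conditions on the run are $\FO$-definable, and the accepting condition is again an instance of the reachability formula. Combining these, the transduction $T$ produces, for each input $u$ in the domain, a node for every transition fired by the accepting run, labelled by the emitted output letter and ordered chronologically, which is by construction the word $\mathcal{A}(u)$. The only genuinely delicate verification, as noted, is that $\varphi_\leq$ defines a total linear order coinciding with the output order; the rest is finite-case bookkeeping on top of Lemmas~\ref{2w-FOT:locallanguages} and~\ref{2w-FOT:Order-formula}.
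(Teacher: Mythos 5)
Your proposal is correct and follows essentially the same route as the paper: copies indexed by states, position formulas built from the reachability formula of Lemma~\ref{2w-FOT:Order-formula} applied to the initial configuration, order formulas set to $\varphi^{q,q'}(x,y)$ itself, and the domain formula obtained from Lemma~\ref{2w-FOT:locallanguages}. The only cosmetic difference is that you handle multi-letter outputs by refining the copy set with slot indices, whereas the paper instead normalizes the transducer so each transition emits at most one letter.
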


\begin{proof}
For simplicity of the proof, we consider a transducer $\mathcal{A}=(Q,A,B,\delta,\gamma,i,F)$  where the production of any transition is at most one letter.
This can be done without loss of generality, since any given transducer can be normalized this way by increasing the number of states.
We now give the formal definition of the $\FO$ transduction ${T=(A,B,\varphi_{dom},Q,\varphi_{pos},\varphi_\leq)}$ that realizes $\mathcal{A}$.

As we consider string transductions within the scope of graph transductions, the domain formula also has to ensure that the input is a linear graph. This can be done in $\FO$ by a formula stating that there is one position that has no predecessor, one position that has no successor, every other position has exactly one successor and one predecessor and every pair of positions is comparable.
Then the domain formula of $T$ is the formula describing the language recognized by the input automaton of $\mathcal{A}$ conjuncted with the linear graph formula. By Lemma~\ref{2w-FOT:locallanguages}, as $\mathcal{A}$ is aperiodic the domain formula is $\FO$-definable. 
The order formulas are given by Lemma~\ref{2w-FOT:Order-formula}, where obviously $\varphi_\leq^{q,q'}(x,y)=\varphi^{q,q'}(x,y)$.

The $\varphi_b^q(x)$ formulas, where $q\in Q$ and $b\in B$, 
 express that the production of $\mathcal{A}$ at the position quantified by $x$ in state $q$ is $b$, 
but also that the run of $\mathcal{A}$ over $u$ reaches the said position in state $q$.
Should we define $A_{b,q}=\{a\in A\mid \gamma(a,q)=b\}$, then the first condition is expressed as $\bigvee_{a\in A_{b,q}} a(x)$.
The second condition is then equivalent to saying that there exists a run from the initial state of $\mathcal{A}$ to the current position, which is expressed by the formula $\exists y \forall z\ y\leq z\wedge \varphi^{i,q}(y,x)$.
The formula $\varphi_b^q(x)$ is thus defined as the conjunction of these two formulas.

The transduction $T$ is now defined. All formulas are expressed in the first order logic, and it realizes the same function as $\mathcal{A}$, proving the theorem.
\end{proof}

\section{From $\FO$ transductions to aperiodic two-way transducers}\label{sec:FOTtoA2W}

The proof scheme for this inclusion is adapted from the one in~\cite{EH01} proving that $\MSO$ transductions are realized by two-way deterministic transducers.
We prove that we can construct an aperiodic two-way transducer with $\FO$ look around from a $\FO$ transduction, and that the constructions given in~\cite{EH01} suppressing the look around part preserve the aperiodicity.

We define in the next subsection the models of transducers with look-around that are used in the proof.
We then give an alternative definition of aperiodicity which can be applied to transducers with logic look-around before explaining the constructions that lead up to the result.

\subsection{Transducers with look-around}

Here, we define two kinds of transducers with look-around.
The first one is a restriction of two-way transducers with regular look-around, where we limit the regular languages used in the tests to Star-free languages, which is the rational characterization of first-order logic.
These transducers differ from the classic ones by their transitions, where the tests are not determined by the letter read, but also by the prefix and suffix which can be evaluated according to some regular languages.

The second extension we consider is transducers with first-order look around.
In this case, the selection of a transition, as well as the movements of the reading head, are determined by formulas.
Formal definitions are given below.

In both cases only the definition of transition is changed, the definition of run and accepting run remaining the same.

\begin{definition}[two-way transducer with Star-Free look around]
 \emph{Two-way transducers with Star-Free look around} are a subclass of two-way transducers with regular look around defined in \cite{EH01}, where all languages in the tests are Star-Free.

Formally, it is a machine $\mathcal{A}=(Q,A,B,\Delta,i,F)$ where $Q$, $A$, $B$, $i$ and $F$ are the same as for two-way transducers,
and transitions and productions are regrouped in $\Delta$, and are of the form 
$(q,t,q',v,m)$ where $q$ and $q'$ are states from $Q$, $v\in B^*$ is the production of the transition, $m\in\{-1,0,+1\}$ describes the movement of the reading head
and $t$ is a test of the form $(L_p,a,L_s)$ where $a$ is a letter of $A\uplus\{\vdash,\dashv\}$, and $L_p$ and $L_s$ are Star-Free languages over the same alphabet.
A test $(L_p,a,L_s)$ is satisfied if the reading head is on a position labelled by the letter $a$, 
the prefix of the input word up to the position of the reading head belongs to $L_p$, and symmetrically the suffix belongs to $L_s$.

Such a machine is deterministic if the tests performed in a given state are mutually exclusives.

\end{definition}

\begin{definition}[two-way transducer with $\FO$ look around]
 \emph{Two-way transducers with $\FO$ look around} are a subclass of two-way transducers with $\MSO$ look around where formulas are restricted to the first-order.

Formally, it is a machine $\mathcal{A}=(Q,A,B,\Delta,i,F)$ where $Q$, $A$, $B$, $i$ and $F$ are the same as two-way transducers,
and transitions of $\Delta$ are of the form 
$(q,\varphi(x),q',v,\psi(x,y))$  where $q$ and $q'$ are states from $Q$, $v\in B^*$ is the production of the transition and $\varphi(x)$ and $\psi(x,y)$ are $\FO$ formulas with respectively one and two free variables.
A transition $(q,\varphi(x),q',v,\psi(x,y))$ can be taken if the formula $\varphi(x)$ holds on the input word, where $x$ quantifies the current position $i$ of the reading head, say
$\vdash\! u\!\dashv \ \models \varphi(x/i)$.
Then the reading head moves to a position $j$ such that
 $\vdash\! u\!\dashv \ \models \psi(x/i,y/j)$.	

Such a machine is deterministic if the unary tests appearing in a given state are mutually exclusive, and if for any input word $u$, any movement formula $\psi(x,y)$ and any position $i$, there exists at most one position $j$ such that 
 ${\vdash\! u\!\dashv \ \models \psi(x/i,y/j)}$.

\end{definition}

\subsection{Aperiodicity by path contexts}

The reading head of transducers with logic look-around can jump several positions at a time and in any direction. Then the notion of behavior for such transducers becomes blurry, since behaviors would have to be considered starting at any position, and moreover the direction taken while exiting a word is not decided locally, but depends on the context.

We thus give an equivalent characterization of the aperiodicity of a transducer through all contexts at a time, for machines whose reading head does not move position by position.

We recall that given a (deterministic) transducer ${\mathcal{A}=(Q,A,B,\Delta,init,F)}$ and $u$ an input word of $\mathcal{A}$, the accepting path of $\mathcal{A}$ over $u$, denoted $path(u)$, is
the sequence  $(q_0,i_0)\ldots (q_n,i_n)$ of pairs from $Q\times[0,|u|+1]$ (the length of $u$ plus the endmarkers) describing the behavior of the reading head of $\mathcal{A}$ while reading $u$, as defined in Subsection~\ref{Subsection:2w}.

We now define the projection of paths, as a way to highlight some information and forget the rest. It is applied to contexts in order to only retain the influence of a word on its context.

\begin{definition}[Projection and context paths]

Let ${I=[i_1,\ldots,i_k]}$ be an ordered sequence of integers.
We define $path_I(u)$ as the sequence of pairs from $Q\times\{1,\ldots,k\}$ such that for any pairs $(q,j)$ and $(q',j')$, 
$(q,j)$ appears before $(q',j')$ in $path_I(u)$ if, and only if, 
$(q,i_j)$ appears before $(q',i_{j'})$ in $path(u)$.
Informally, this corresponds to selecting pairs whose position is in $I$ 
and renaming them according to the set $I$.

Abusing notations, we will note the \emph{context path} $path_{vw}(vuw)=path_I(vuw)$ where $I$ is the set of positions of $v$ and $w$.
\end{definition}

Then $path_{vw}(vuw)$ is the trace of the run over $u$ on the context $v,w$ and two words $u$ and $u'$ are $\mathcal{A}$-equivalent if for any context $v,w$, we have equality of the paths contexts $path_{vw}(vuw)=path_{vw}(vu'w)$.
Then by definition of the aperiodicity, 
a transducer or an automaton $\mathcal{A}$ is \emph{aperiodic} if
there exists a positive integer $n$ such that for any words $u$, $v$ and $w$ on the input alphabet of $\mathcal{A}$, the context paths
 $path_{vw}(vu^nw)$ and $path_{vw}(vu^{n+1}w)$ are equals.
One should remark that on two-way transducers, this notion is equivalent to the aperiodicity of the transition monoid.
The next lemma serves as the link from the first-order logic to the aperiodicity by context paths.

\begin{lemma}\label{FOT-2w:ParcFOT}
Let $T$ be a $\FO$ graph transduction.
There exists a positive integer $n$ such that for any input words $u$, $v$ and $w$ such that $vu^nw$ is in the domain of $T$,
$vu^{n+1}w$ is also in the domain of $T$ and the two words satisfy the same formulas of $T$, when the free variables quantify positions of $v$ or $w$.

\end{lemma}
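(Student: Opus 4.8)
The plan is to reduce the statement to the classical equivalence between first-order definability and aperiodicity of the syntactic monoid (Sch\"utzenberger, McNaughton and Papert), already invoked for Lemma~\ref{2w-FOT:locallanguages}; the only extra ingredient is a uniform treatment of the free variables, which are pinned inside the context $v,w$.

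First I would collect the finitely many formulas constituting $T$: the domain formula $\varphi_{dom}$, the position formulas $\varphi_b^c(x)$ and the order formulas $\varphi_\leq^{c,c'}(x,y)$. Recall that membership in the domain of $T$ also requires the output order to be a well-formed linear order of the existing positions; since $C$ is finite and this order is described by the $\varphi_\leq^{c,c'}$, the well-formedness is expressible by finitely many $\FO$ sentences built from the formulas of $T$ (totality, antisymmetry, transitivity, existence of successors carrying the correct labels). I would add these sentences to the collection, so that the full condition ``$vuw$ belongs to the domain of $T$'' is captured by a finite family of $\FO$ sentences.

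To deal with the free variables I would encode them in the alphabet. For a formula $\psi$ with free variable tuple $\bar x$, consider the tagged alphabet $A_{\bar x}=A\times\{0,1\}^{|\bar x|}$, each letter recording whether its position is pointed to by each variable, and the language $L_\psi\subseteq A_{\bar x}^*$ of tagged words satisfying $\psi$ at the marked positions. By the Sch\"utzenberger--McNaughton--Papert theorem each $L_\psi$ is aperiodic; let $h_\psi\colon A_{\bar x}^*\to M_\psi$ be its syntactic morphism and $n_\psi$ the aperiodicity index of $M_\psi$. As there are finitely many formulas, set $n=\max_\psi n_\psi$, so that $x^{n}=x^{n+1}$ holds in every $M_\psi$. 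Now fix $u,v,w$ with $vu^nw$ in the domain of $T$. The free variables we must preserve point only into $v$ or $w$, so the tags sit only on the $v$- and $w$-factors: the tagged word is $\tilde v\,u\,\tilde w$ with $u$ untagged. In $M_\psi$ we then have $h_\psi(\tilde v\,u^{n}\,\tilde w)=h_\psi(\tilde v)\,h_\psi(u)^{n}\,h_\psi(\tilde w)=h_\psi(\tilde v)\,h_\psi(u)^{n+1}\,h_\psi(\tilde w)=h_\psi(\tilde v\,u^{n+1}\,\tilde w)$, since $h_\psi(u)^{n}=h_\psi(u)^{n+1}$. Equal syntactic images means equal membership in $L_\psi$, i.e. $vu^nw$ and $vu^{n+1}w$ satisfy $\psi$ at the marked positions if and only if each other. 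Applying this to $\varphi_{dom}$ and the well-formedness sentences (which carry no tags) shows that $vu^{n+1}w$ is again in the domain of $T$; applying it to the $\varphi_b^c$ and $\varphi_\leq^{c,c'}$ with their marks in $v$ or $w$ shows that the two words satisfy the same formulas of $T$, which is the claim.

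The main obstacle is conceptual rather than computational: getting a single threshold $n$ that is uniform in $u$, $v$ and $w$. This is exactly what the tagged-alphabet encoding buys, since the aperiodicity index is an invariant of each formula's syntactic monoid and does not depend on the context words, and the finiteness of the family then lets us take the maximum. One should also verify that pumping the untagged factor $u$ never disturbs the tags, which holds precisely because the tracked free variables are confined to $v$ and $w$, and that the intended reading of ``the same formulas'' is the one under the natural position-by-position identification of the two $w$-suffixes of $vu^nw$ and $vu^{n+1}w$.
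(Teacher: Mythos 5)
Your proposal is correct and follows essentially the same route as the paper's proof: encode the free variables as a $\{0,1\}$-enrichment of the alphabet so that each formula of $T$ becomes a sentence defining an aperiodic language, use the resulting aperiodicity index (which is independent of $u$, $v$, $w$), and take the maximum over the finitely many formulas. Your additional observation that domain membership also involves the $\FO$-expressible well-formedness of the output order is a slightly more careful treatment of a point the paper handles only via $\varphi_{dom}$, but it does not change the argument.
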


\begin{proof}
First consider the domain formula of $T$. Since it is a $\FO$ formula, it has an aperiodicity index $n$, in the sense that for any  words $u$, $v$ and $w$, $vu^nw$ is in the domain of $T$ if, and only if, $vu^{n+1}w$ is in the domain of $T$.

We now prove the result in the case where $i$ ranges over $v$ and $j$ ranges over $w$, but similar proofs hold for $i$ and $j$ ranging independently over $v$ and $w$.
Consider a pair $c$, $c'$ of copies in $C$, and integers $0\leq i<|v|$ and $0\leq j < |w|$.
Then a word with positions $i$ and $j$ quantified respectively by $x$ and $y$ can be seen
 as a word over the alphabet $A\times\{0,1\}^2$, where all letters have $(0,0)$ as second component, except $(v_i,1,0)$ and $(w_j,0,1)$.
The formula $\varphi_\leq^{c,c'}(x,y)$ can then be equivalently seen as a closed formula over this enriched alphabet.
 This formula being in $\FO$, it describes an aperiodic language, and then there exists an integer $n'$ such that
 $vu^{n'}w$
  satisfies  $\varphi_\leq^{c,c'}(x/i,y/|vu^{n'}|+j)$ if, and only if,
 $vu^{n'+1}w$ satisfies $\varphi_\leq^{c,c'}(x/i,y/|vu^{n'+1}|+j)$.
 
A similar argument also holds for the node formulas $\varphi_b^c(x)$.
 As there is a finite number of formulas, there exists an integer, the maximum of the index of each formulas, such that the result holds. 
\end{proof}

This lemma means that the transducer has an aperiodicity index, in the sense that $u^n$ and $u^{n+1}$ behave the same way for the same context.
It also corresponds to the notion of aperiodicity defined earlier in this section, where the sequence ranges over pairs of copy and position.

\subsection{Construction of the aperiodic transducer}
We now hold all the necessary tools to prove the reduction from $\FO$ transductions to aperiodic two-way transducers.

We present the first construction, from a $\FO$ transduction to a two-way transducer with $\FO$ look around.
The construction is quite simple. 
By putting the copy set of the graph transduction as the set of states of the transducer, we can use the fact that the reading head of a transducer with logic look around jumps between positions to strictly follow the output structure of the input transduction. 
We then use Lemma~\ref{FOT-2w:ParcFOT} to prove the aperiodicity of the construction.

\begin{theorem}\label{FOT-2w:FOTto2wFO}
Let $T$ be a $\FO$ graph transduction.
Then we can effectively construct an aperiodic two-way transducer with $\FO$ look around that realizes the same function over words.
\end{theorem}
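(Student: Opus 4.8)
The plan is to turn a $\FO$ graph transduction $T = (A, B, \varphi_{dom}, C, \varphi_{pos}, \varphi_\leq)$ into a two-way transducer with $\FO$ look around whose states are exactly the copies $C$, so that the reading head walks through the positions of the input in precisely the order dictated by the output structure of $T$. The central idea is that a transducer with $\FO$ look around may jump its reading head from a position $i$ to a position $j$ whenever a movement formula $\psi(x,y)$ holds, and this is exactly the expressive freedom needed to "follow" the output order $\varphi_\leq$ directly rather than reconstruct it step by step.

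**The construction** would proceed as follows. First I would fix the state set to be $C$ (together with possibly an initial and a final state), the intuition being that configuration $(c, i)$ of the transducer represents the $c$-copy of the input node $i$, i.e.\ a node of the output graph. For each pair of copies $c, c' \in C$ I would build a movement formula that, from the current output node $(c,i)$, locates the output node $(c', j)$ that is its immediate successor in the order defined by the $\varphi_\leq^{c,c'}$ formulas. Concretely, the successor of $(c,i)$ is the unique $(c',j)$ satisfying $\varphi_\leq^{c,c'}(i,j)$ that is minimal for the output order among all nodes strictly after $(c,i)$; since the output order is itself $\FO$-definable from the $\varphi_\leq^{c,d}$ formulas, "being the immediate successor" is expressible by a $\FO$ formula $\psi^{c,c'}(x,y)$. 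The production emitted upon arriving at $(c',j)$ is the letter $b$ for which $\varphi_b^{c'}(j)$ holds — which exists and is unique by the mutual exclusivity of the position formulas — and this can be read off via the unary look-around test. The domain formula $\varphi_{dom}$ is used as a global admissibility test, and initialisation/termination are handled by having the transducer first move to the $\leq$-minimal output node and accept after emitting the $\leq$-maximal one. I would also need to verify determinism: the movement formulas each define at most one target position (immediate successors are unique), and the unary tests in each state are mutually exclusive, which is exactly the determinism condition in the definition of $\FO$ look-around transducers.

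**The aperiodicity** is where Lemma~\ref{FOT-2w:ParcFOT} does the real work. Once the transducer is built, its behaviour on an input word is completely governed by evaluations of the $\FO$ formulas $\varphi_{dom}$, $\varphi_{pos}$ and the derived successor formulas $\psi^{c,c'}$. By Lemma~\ref{FOT-2w:ParcFOT} there is an index $n$ such that for all $u, v, w$, the words $vu^nw$ and $vu^{n+1}w$ lie in the domain together and agree on all the relevant formulas when the free variables range over positions of $v$ and $w$. Translating this into the context-path language of Subsection on aperiodicity by path contexts, this means $path_{vw}(vu^nw) = path_{vw}(vu^{n+1}w)$, which is precisely the definition of aperiodicity for a transducer whose head moves by formula-driven jumps. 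So the construction is aperiodic by design, with index inherited from the formulas of $T$.

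**The main obstacle** I expect is not the aperiodicity step but the faithful encoding of the output order by local successor jumps. One must be careful that the relation defined by the family $\varphi_\leq^{c,c'}$ really is a linear order on the output node set whenever the input is in the domain — this is guaranteed by the semantic requirement on linear graph transductions, that the order formulas define a word — and that "immediate successor in this order" is genuinely $\FO$-definable uniformly across the $|C|$ copies. The subtlety is that the successor of an output node in copy $c$ may lie in a \emph{different} copy $c'$, so the movement formula must range over all target copies and select the globally minimal next node; expressing this minimality in $\FO$ requires combining the $\varphi_\leq^{c,d}$ formulas into a single order predicate and then asserting there is no intermediate node. Provided the transduction is well-defined (the order is total and the output is finite and linear), this is routine $\FO$ manipulation, but it is the place where the argument must be set up with care so that the resulting transducer is deterministic and its head truly traverses the output structure exactly once.
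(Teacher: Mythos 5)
Your construction is essentially identical to the paper's: states are the copies $C$ plus initial/final states, the movement formulas are $\FO$ definitions of the immediate successor in the output order (the paper's formula $S^{c,c'}(x,y)$ asserts $\varphi_\leq^{c,c'}(x,y)$ and that no existing output node lies strictly between), initialisation and termination go to the $\leq$-minimal and $\leq$-maximal output nodes, and aperiodicity is obtained exactly as you say from Lemma~\ref{FOT-2w:ParcFOT} via the context-path characterization. The only cosmetic difference is that the paper emits the letter of the \emph{source} node of each jump rather than the target, which changes nothing of substance.
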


We have now constructed an aperiodic two-way machine from the input $\FO$ transduction.
But even though two-way transducers with $\MSO$ look around are known to be equivalent to two-way transducers \cite{EH01}, we need to prove that we can suppress the $\FO$ look around while preserving the aperiodicity of the construction.
This is done by the two following theorems, using Star-free look around as an intermediate step.
We show that the construction evoked in~\cite{EH01} do preserve the aperiodicity, leading to the result.

\begin{theorem}\label{FOT-2w:2wFOto2wSF}
Given an aperiodic two-way transducers with $\FO$ look around, we can construct an
aperiodic two-way transducers with Star-Free look around that realizes the same function.

\end{theorem}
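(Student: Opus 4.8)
The plan is to eliminate the two features that distinguish the $\FO$-look-around model from the Star-Free one, while keeping the state set essentially intact and adding only bounded bookkeeping. The two features are: (i) the transition tests are arbitrary unary $\FO$-formulas $\varphi(x)$ rather than Star-Free prefix/letter/suffix tests, and (ii) the movement of the head is governed by a binary formula $\psi(x,y)$ that may jump to an arbitrary position, rather than by a unit move in $\{-1,0,+1\}$. I would treat these independently: first rewrite every unary test as a Boolean combination of Star-Free tests, then simulate each jump by a guided walk that moves one cell at a time.

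For (i), I would invoke the Sch\"utzenberger--McNaughton--Papert equivalence already used in Lemma~\ref{2w-FOT:locallanguages}. Partition every position according to the triple (class of the prefix strictly to its left, its own letter, class of the suffix strictly to its right), where the classes range over the finitely many syntactic classes of the aperiodic monoids recognizing the relevant languages. Each such triple is exactly a Star-Free test $(L_p,a,L_s)$, these atoms are mutually exclusive and exhaustive, and any unary $\FO$-formula $\varphi(x)$ is a union of atoms. Replacing each $\FO$-transition by the transitions indexed by the atoms in its union preserves the function; since the original tests in a given state were mutually exclusive, no atom is claimed by two transitions, so determinism is preserved.

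The crux is (ii). Fix a jump formula $\psi(x,y)$; since the machine is deterministic, $\psi(i,\cdot)$ has at most one witness for each $i$. The language $L_\psi$ of doubly-marked words (with distinguished positions $x<y$) satisfying $\psi\wedge x<y$ is $\FO$-definable over the marked alphabet, hence recognized by a morphism $\eta$ into a finite aperiodic monoid $N$. A marked word with $x<y$ factors as $\mathrm{pref}\cdot\hat a\cdot\mathrm{infix}\cdot\hat b\cdot\mathrm{suff}$, so membership in $L_\psi$ depends only on the $\eta$-images of these five pieces. I would first decide the direction of the jump at the start position $i$ by the mutually exclusive unary tests $\psi(x,x)$, $\exists y(y<x\wedge\psi(x,y))$ and $\exists y(y>x\wedge\psi(x,y))$, each a Star-Free test by step (i); a zero-move handles $j=i$. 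For a rightward jump I would store in the finite control the $\eta$-image of $\mathrm{pref}\cdot\hat a$, obtained at $i$ from a Star-Free prefix test (each $\eta^{-1}(s)$ being Star-Free) together with the letter read, and then walk right, multiplying this image by $\eta$ of each letter crossed so that at a position $k$ the control holds $\eta(\mathrm{pref}\cdot\hat a\cdot\mathrm{infix})$ with $\mathrm{infix}=w_{i+1}\cdots w_{k-1}$. The machine stops exactly when this stored image, times $\eta(\hat b)$ for the current letter, times $\eta(\mathrm{suff})$ recovered by a Star-Free suffix test, lands in the accepting subset of $N$; this condition is equivalent to $\psi(i,k)$, so by functionality it fires at $k=j$ and nowhere earlier. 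Leftward jumps are symmetric, and the production $v$ is emitted on the first step of the walk while all intermediate steps output $\epsilon$.

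The main obstacle is verifying that the resulting Star-Free-look-around transducer is again aperiodic. Since its head moves one cell at a time, aperiodicity is that of its transition monoid, which I would analyze by reading the machine as an ordinary two-way transducer over the alphabet enriched with the outcomes of the finitely many prefix/suffix tests. The walking gadgets perform only a computation in the aperiodic monoid $N$, the tests recognize Star-Free (hence aperiodic) languages, and the original machine was aperiodic; I would combine these facts through the path-context characterization, exhibiting an aperiodicity index $n$ for which $path_{vw}(vu^nw)=path_{vw}(vu^{n+1}w)$, in the spirit of Lemma~\ref{FOT-2w:ParcFOT}. Confirming that these aperiodic ingredients compose into an aperiodic transition monoid, rather than accidentally reintroducing a nontrivial group through the interaction of the walk with the two-way motion, is the delicate bookkeeping I expect to occupy most of the full proof.
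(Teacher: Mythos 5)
Your proposal follows essentially the same route as the paper's proof, which relies on the step-by-step simulation of jumps from Lemma~6 of Engelfriet and Hoogeboom, observes that viewing the free variables as an enrichment of the alphabet keeps every look-around language first-order definable and hence Star-Free, and then concludes aperiodicity via the context-path characterization. The ``delicate bookkeeping'' you defer at the end is settled in the paper by precisely the observation you already make: the path of the Star-Free machine is entirely deduced from the path of the $\FO$ machine by inserting the step-by-step walks, and since those walks are governed by aperiodic (Star-Free) data, equality of the context paths $path_{vw}(vu^nw)=path_{vw}(vu^{n+1}w)$ transfers from the input transducer to the constructed one.
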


\begin{proof}
In order to prove this theorem, we rely on the proof of Lemma~6 from~\cite{EH01}, which proves that two-way transducers with $\MSO$ look around can be expressed by two-way transducers with regular look around.
This is done by constructing a transducer whose regular tests stem directly from the $\MSO$ formulas.
Then the reading head simulates the jumps of the reading head of the transducer with $\MSO$ look around by moving step by step up to the required position.

We aim to prove on one hand that if the formulas are defined in the first-order, then the resulting two-way transducer only uses Star-free look around, 
and on the other hand that if moreover the input transducer is aperiodic, then the output transducer is also aperiodic.

The first claim is proved by noticing that the languages used in the regular look around construction are languages defined by formulas of the input transducer with $\FO$ look around with free variables. Then if the free variables are seen as an enrichment of the alphabet, similarly to what is done in the proof of Lemma~\ref{FOT-2w:ParcFOT}, the formula remains first-order, and consequently all the languages used in look around tests are Star-free.

Now let us compare the moves of the reading head of the resulting transducer with Star-free look around with the ones of the head of the input transducer with $\FO$ look around.
The path of the resulting transducer over any word can entirely be deduced from the path of the input transducer, by adding step by step walks between the jumps of the reading head.
Then, if the input transducer is aperiodic with index $n$, given three words $u$, $v$ and $w$, the context paths $path_{vw}(vu^nw)$ and $path_{vw}(vu^{n+1}w)$ are equal, and thus the context paths for the transducer with Star-free look around, that are deduced from it, are equal too, proving the aperiodicity of the resulting transducer.
\end{proof}

We finally prove that we can suppress the Star-free look around tests while preserving the aperiodicity, which concludes the proof of the main theorem.

\begin{theorem}\label{FOT-2w:2wSFto2w}
Given an aperiodic two-way with Star-free look around, we can construct an aperiodic two-way transducers that realizes the same function.
\end{theorem}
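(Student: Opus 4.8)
The plan is to reuse the look-around elimination construction of Engelfriet and Hoogeboom \cite{EH01} and to verify that, when the look-around languages are Star-free, this construction preserves aperiodicity. First I would recall that \cite{EH01} shows how a two-way transducer with regular look-around can be turned into an equivalent plain two-way transducer: the tests on the prefix and the suffix against fixed regular languages $L_p$ and $L_s$ are resolved by the machine itself, which recomputes membership of the current prefix (resp. suffix) in $L_p$ (resp. $L_s$) through deterministic excursions governed by the recognizing automata, the outcome of each test being stored in the finite control. Applying this construction to our transducer yields a plain two-way transducer realizing the same function, so the semantic part is exactly \cite{EH01} and requires nothing new. (The look-behind part could alternatively be removed by composing with an aperiodic sequential annotator via Theorem~\ref{Comp-Thm}, but the look-ahead part is most uniformly handled by the excursion-based elimination, so I follow the latter throughout.)

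The substance is the preservation of aperiodicity, which I would establish through the characterization of aperiodicity by context paths introduced earlier in this section, exactly as in the proof of Theorem~\ref{FOT-2w:2wFOto2wSF}. The crucial point is that, since the look-around languages are Star-free, each of them is recognized by a counter-free automaton, equivalently by an automaton whose transition monoid is aperiodic, by the theorem of Sch\"utzenberger and McNaughton--Papert \cite{Schutzenberger65,MP71}. Hence the excursions that the plain transducer performs to evaluate a look-around test are driven by \emph{aperiodic} automata. Let $n_0$ be an aperiodicity index witnessing that the original transducer with Star-free look-around is aperiodic by context paths, and let $n_1$ be the maximum of the aperiodicity indices of the automata recognizing $L_p$ and $L_s$. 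For $n\geq\max(n_0,n_1)$ and any words $v,w$, the context path $path_{vw}(vu^nw)$ of the constructed transducer is obtained from the context path $path_{vw}(vu^nw)$ of the look-around transducer by inserting, at every position where a test is performed, the deterministic excursion that recomputes it. The trace of each such excursion over the block $u$ depends only on the behavior of the corresponding recognizing automaton on $u$, which stabilizes between $u^n$ and $u^{n+1}$ because $n\geq n_1$, while the surrounding jumps stabilize because $n\geq n_0$. Therefore $path_{vw}(vu^nw)=path_{vw}(vu^{n+1}w)$, which is precisely the context-path definition of aperiodicity, and the constructed plain two-way transducer is aperiodic.

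The delicate step, and the one I expect to be the main obstacle, is this aperiodicity bookkeeping: one must check that inserting the recomputation excursions into the path does not create any counting behavior, that is, that the transition monoid of the resulting input automaton stays aperiodic rather than merely regular. This is exactly where the Star-freeness hypothesis is used in an essential way: the excursions are runs of counter-free automata, so the contribution of a repeated block $u^n$ to each excursion is governed by an aperiodic monoid and stabilizes, and combined with the aperiodicity of the host transducer through the context-path composition, the join $\max(n_0,n_1)$ of the two indices yields a single global aperiodicity index. The only additional care needed is to make the elimination uniform, so that one index works for all contexts $v,w$ simultaneously; this is immediate since only finitely many look-around languages and a single host transducer are involved. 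This concludes the proof of the theorem and, together with Theorems~\ref{FOT-2w:FOTto2wFO} and~\ref{FOT-2w:2wFOto2wSF}, of the main result.
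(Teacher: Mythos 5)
There is a genuine gap, and it sits exactly where you wave your hands. You propose to eliminate the Star-free tests by having the plain two-way transducer perform ``deterministic excursions'' that recompute membership of the current prefix in $L_p$ (and of the suffix in $L_s$) and then resume. But a two-way transducer has no pebble: once it leaves the position at which the test was issued, it has no way to recognize that position again when the excursion is over. Recovering that position deterministically is the entire technical difficulty, and it is precisely the content of the composition theorem (Theorem~\ref{Comp-Thm}), whose proof needs the backward candidate-tracking mode and the two-concurrent-runs return mode. You explicitly set Theorem~\ref{Comp-Thm} aside in favour of the ``excursion-based elimination,'' but the excursion-based elimination, made precise, \emph{is} that machinery; without it your construction is not defined. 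Relatedly, your aperiodicity accounting is too optimistic: the return phase of an excursion tracks relations and pairs of runs of the recognizing automaton, not a single run, so the index does not combine as $\max(n_0,n_1)$ --- even the paper's careful composition proof only obtains an additive bound of the form $2n_{\mathcal{A}}+n_{\mathcal{B}}+2$.

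The paper's route is different and avoids the positioning problem altogether: following Lemma~4 of \cite{EH01}, it \emph{preprocesses} the input, using one aperiodic left-to-right sequential transducer per prefix language $L_p$ and one aperiodic right-to-left sequential transducer per suffix language $L_s$ (aperiodic because the languages are Star-free), so that each position of the enriched input carries the truth values of all tests as part of its letter. The look-around transducer then becomes a plain two-way transducer reading the enriched alphabet, which is aperiodic because it has the same context paths as the original. Finally, Theorem~\ref{Comp-Thm} (and its symmetric version for right sequential transducers) is invoked to collapse the composition of these aperiodic machines into a single aperiodic two-way transducer. If you want to keep your excursion-based plan, you must either import the full return mechanism of Theorem~\ref{Comp-Thm} and redo its aperiodicity analysis for your setting, or switch to the annotation-plus-composition argument.
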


\begin{proof}
Here again, we consider the construction from~\cite{EH01}, Lemma~4, proving that we can suppress the regular look around tests.
Our goal is then to prove that this operation preserves the aperiodicity when the input transducer only uses Star-free languages.

The construction relies heavily on the fact that the composition of a two-way transducer with a one-way transducer can be done by a two-way transducer.
It is used to preprocess the input by adding the result of each regular tests from the transitions at each position.
Given the test $(L_p,a,L_s)$ of a transition, a left-to-right pass simulates $L_p$ and reproduces the input where each position is enriched with the information~: does its prefix belong to $L_p$.
Symmetrically, a right-to-left transducer adds the same information for $L_s$.
Let us remark that since these languages are Star-free, the input automaton of the transducers simulating these languages are aperiodic.

Then the information regarding every transition is added to the input, and lastly we can construct a two-way transducer that acts in the same way as the input transducer, but where all the look around have been suppressed and are done locally by looking at the enrichment part of the letter.
This two-way transducer without look around is aperiodic if the input transducer is aperiodic, since they share the same paths, and thus context paths.

Finally, the input transducer is given as the composition of a single aperiodic two-way transducer with a finite number of aperiodic one-way transducers. 
Should we first remark that, by symmetry of the problem, Theorem~\ref{Comp-Thm} also holds for right sequential transducers, through several uses of this composition result we finally obtain a unique two-way transducer that realizes the input transducer 
with Star-free look around.
\end{proof}

\section{Conclusion}

We recall that a similar work has been done for streaming string transducers by Filiot, Krishna and Trivedi~\cite{FKT14}.
Then through $\FO$ transductions, this result and Theorem~\ref{THM-main} prove the equivalence of aperiodicity for the two models of transducers.

There exists algorithms that input a two-way transducer and construct directly an equivalent streaming string transducer (see~\cite{AFT12} for example).
It would be interesting to check first if the aperiodicity is preserved through these algorithms, and secondly to compare the size and aperiodicity indexes of the two transition monoids.
Although unknown from the authors, a reciprocal procedure and its study would hold the same interest.

On a more generic note, one can ask which fragments of logic preserve their algebraic characterization in the scope of two-way transducers and $\MSO$ transductions.
For example, are $\mathcal{J}$-trivial transducers equivalent to $\mathcal{B}\Sigma_1$ transductions ?
The main challenges for this question are the stability by composition of these restricted classes of transducers on one hand, and on the other hand
the very definition of logic transductions for restricted fragments, as a fragment must retain some fundamental expressive properties, such as being able to characterize linear graphs.

Finally, we would like to point out the fact that even if we can decide if a given two-way transducer is aperiodic, it is still open to decide if the function realized by a two-way transducer can be realized by an aperiodic one.
An promising approach for this problem might be to consider machine-independent descriptions of functions, as defined recently for streaming string transducers in~\cite{AFR14} for example. This was successfully done in~\cite{Boj14} for machines with origin semantic.
We also think that this question could be solved by the notion of canonical object of a function over words, which has yet to be defined.

\paragraph*{Acknowledgements}

We would like to thank Antoine Durand-Gasselin, Pierre-Alain Reynier
and Jean-Marc Talbot for very fruitful discussions.

\bibliography{IEEEabrv,ap2way}

\newpage
\section*{Appendix}
\appendix

\subsection*{Composition of transducers}
		\setcounter{theorem}{9}
\begin{theorem}
Let $\cA$ be a one-way transducer and $\cB$ be a two-way transducer, both deterministic and aperiodic, that are composable.
Then we can effectively build a deterministic and aperiodic two-way transducer $\cC$ that realizes the function $\cB\circ\cA$.
\end{theorem}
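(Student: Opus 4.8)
The plan is to build $\cC$ by the classical simulation technique and then to prove that its transition monoid divides a wreath product of the two given aperiodic monoids, invoking closure of aperiodicity under wreath product and division. First I would reuse the underlying machine construction of~\cite{AHU69,CJ77}, which is already known to produce a deterministic two-way transducer realising $\cB\circ\cA$, so that only the preservation of aperiodicity remains to be argued. As in the proof of Theorem~\ref{2w-FOT:Result}, I would normalise $\cA$ so that each transition outputs at most one letter; then the positions of the output word $\cA(w)$ inject into the positions of $w$, namely those where $\cA$ actually produces a letter. The transducer $\cC$ reads $w$ and simulates the two-way run of $\cB$ on $\cA(w)$, keeping in its finite control the current state of $\cB$ together with the state in which $\cA$ enters the current input position, which is exactly what is needed to recover the letter of $\cA(w)$ under $\cB$'s head. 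Moving $\cB$'s head forward is immediate; moving it backwards forces $\cC$, since $\cA$ is one-way, to recover the relevant $\cA$-state by rescanning $w$ from the left endmarker.

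To capture aperiodicity algebraically, I would write $M_{\cA}$ and $M_{\cB}$ for the transition monoids of $\cA$ and $\cB$, with morphisms $\mu_{\cA}$ and $\mu_{\cB}$, and let $Q_{\cA}$ be the state set of $\cA$. For an input factor $u$ and a state $q\in Q_{\cA}$, let $\cA_q(u)$ denote the word that $\cA$ outputs while reading $u$ starting in state $q$, and consider the map
\[
  \Phi(u)=\bigl(\mu_{\cA}(u),\ (\mu_{\cB}(\cA_q(u)))_{q\in Q_{\cA}}\bigr).
\]
Because $\cA$ is one-way, $\cA_q(uv)=\cA_q(u)\,\cA_{q\cdot u}(v)$, where $q\cdot u$ is the state reached by $\cA$ after reading $u$ from $q$; hence the second component of $\Phi(uv)$ is the pointwise product of that of $\Phi(u)$ with that of $\Phi(v)$ reindexed by the action $q\mapsto q\cdot u$ of $\mu_{\cA}(u)$ on $Q_{\cA}$. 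This is precisely the multiplication of the wreath product $M_{\cB}\wr(Q_{\cA},M_{\cA})$, so $\Phi$ is a monoid morphism into that wreath product.

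The central claim I would then establish is that $\Phi(u)=\Phi(u')$ implies $u\sim_{\cC}u'$, i.e.\ that $M_{\cC}$ divides $M_{\cB}\wr(Q_{\cA},M_{\cA})$. I would prove this by classifying the crossings of $\cC$'s head over the region occupied by a factor $u$ inside an arbitrary context: each crossing is either a rescanning pass of $\cA$, whose entry/exit behaviour on $Q_{\cA}$ is governed solely by $\mu_{\cA}(u)$, or a $\cB$-simulation pass, during which $\cC$ retraces the output letters $\cA_q(u)$ produced for the entering state $q$ and whose left/right entry--exit behaviour is one of the four behaviours of $\cB$ on $\cA_q(u)$, hence determined by $\mu_{\cB}(\cA_q(u))$. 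Consequently the entire context-path behaviour of $\cC$ on $u$ factors through $\Phi(u)$, which yields the division.

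Finally, since $M_{\cA}$ and $M_{\cB}$ are aperiodic by hypothesis and the class of aperiodic monoids is closed under wreath product and division~\cite{Almeida94}, the wreath product is aperiodic and therefore so is its divisor $M_{\cC}$; by the remark relating the transition monoid to context-path aperiodicity this makes $\cC$ aperiodic, and every step is effective because the monoids involved are finite. The hard part will be the division claim: one must verify that interleaving the one-way rescanning passes of $\cA$ with the two-way motion of $\cB$ leaks no information about $u$ beyond $\Phi(u)$, and it is precisely the single-letter normalisation of $\cA$ that keeps the position correspondence, and hence $\cC$'s bookkeeping, finite and faithful.
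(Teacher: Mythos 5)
Your high-level strategy --- build the classical composition machine and then bound its transition monoid by a wreath product $M_{\cB}\wr(Q_{\cA},M_{\cA})$ --- is genuinely different from the paper's, which never leaves the machine level: the paper writes out $\cC$ explicitly (a buffered simulation mode, a backward candidate-tracking mode, and a forward two-run-merging mode) and then directly verifies $path_{vw}(vu^nw)=path_{vw}(vu^{n+1}w)$ for $n$ around $2n_{\cA}+n_{\cB}+1$, by decomposing each crossing of $u^n$ into an underlying run of $\cA$ and a run of $\cB$ on the produced word. One remark on the construction itself before the main point: a backward move of $\cB$'s head cannot be implemented by ``rescanning $w$ from the left endmarker,'' since a finite-state two-way machine cannot recognize when such a rescan has returned to the position it left from; the actual construction of~\cite{AHU69,CJ77} (and of the paper) instead walks \emph{backward} maintaining a relation of candidate $\cA$-states and then races two forward runs of $\cA$ until they merge. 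This matters below.

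The genuine gap is the division claim: $\Phi(u)=\Phi(u')$ does \emph{not} imply $u\sim_{\cC}u'$, for either the paper's machine or yours, because the transition monoid of $\cC$ records data that $\Phi$ forgets. First, in the buffered version the state of $\cC$ leaving the region of $u$ carries the last letters produced by $\cA$ inside $u$: if $\cA$ maps $a\mapsto cc$ and $b\mapsto dd$ with $\mu_{\cA}(a)=\mu_{\cA}(b)$ and $\mu_{\cB}(cc)=\mu_{\cB}(dd)$, then $\Phi(a)=\Phi(b)$ yet the exit states hold buffers $cc$ versus $dd$, so $a\not\sim_{\cC}b$. Second, and independently of buffering, when $\cB$ turns around at an output position produced \emph{inside} $u$, $\cC$ leaves the region of $u$ on the left in a candidate-tracking state whose relation component is the $\cA$-transition relation of the \emph{prefix} of $u$ up to the turning point; that turning point is not determined by the behaviours $\mu_{\cB}(\cA_q(u))$ (a behaviour records entry/exit states, not where the head turns), so this relation is not a function of $\Phi(u)$ either. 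Hence $M_{\cC}$ need not divide $M_{\cB}\wr(Q_{\cA},M_{\cA})$, and the appeal to closure of aperiodicity under wreath product and division does not go through as stated. What saves the theorem is that the leaked information \emph{stabilizes on powers}: the boundary productions of $u^n$ and the prefix up to a turning point, measured from the border, eventually coincide for $u^n$ and $u^{n+1}$. That is precisely the weaker statement the paper proves directly on context paths; to salvage your route you would either have to prove that weaker statement anyway, or enlarge the wreath product so that $\Phi$ also tracks the boundary productions and the turning data --- at which point the algebraic detour buys little over the direct argument.
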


\begin{proof}
We first describe precisely the construction evoked in \cite{CJ77}.
The point here is to give solid bases to then prove the aperiodicity of this construction.

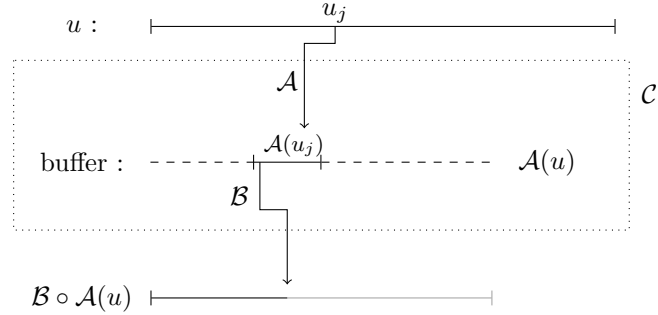
\begin{figure}[h]
\begin{center}
\begin{tikzpicture}[scale=0.9]

\node (u) at (0,5) {$u$ :}; 
\draw[|-|] (1,5)--(7.8,5);
\node (au) at (6.8,3) {$\mathcal{A}(u)$};
\node (buff) at (0,3) {buffer :};

\draw[dotted] (-1,4.5) rectangle (8, 2);
\node (bchap) at (8.3,4) {$\mathcal{C}$};
\node (aau) at (0,1) {$\mathcal{B}\circ\mathcal{A}(u)$};

\node(u4) at (3.7,5.2) {$u_j$};

\draw[dashed] (1,3) -- (2.5,3);
\draw[|-|] (2.5,3) -- (3.5,3);
\draw[dashed] (3.5,3) -- (6,3);

\node (gqa) at (3.1,3.25) {\footnotesize{$\mathcal{A}(u_j)$}};

\draw[->] plot coordinates { (3.7,5) (3.7,4.75) (3.25,4.75) (3.25,3.5)};

\node (ca) at (3,4.2) {$\mathcal{A}$};

\draw[-|,color=gray!70] (3,1) -- (6,1);
\draw[|-] (1,1) -- (3,1);

\node (cb) at (2.3,2.5) {$\mathcal{B}$}; 
\draw[->] plot coordinates {(2.6,3) (2.6,2.3) (3,2.3) (3,1.2)};

\end{tikzpicture}
\end{center}
\caption{Schematic representation of the composition transducer $\mathcal{C}$.
The buffer always contains a production of $\mathcal{A}$ which serves as input for simulating $\mathcal{B}$.}\label{Comp-SchemaTransdC}
\end{figure}

Thus let us describe the transducer $\cC$.
Its set of states will be union of sets describing working modes of the transducer. One mode will be the easy part, when the buffer is full or when we can fill the buffer easily. The second mode will occur if we need to do a backward step on the sequential transducer. It stores the possible candidates then moves back until it can determine the right state. It then switches to the third mode to move back to the required position. The second and third working modes are illustrated in Figures~\ref{Comp-Rewinding} and~\ref{Comp3mod} respectively.\\

Formally, let $\cA=(Q,A,B,\delta,\gamma,q_0,F)$ and $\cB=(P,B,C,\alpha,\beta,p_0,G)$ be a required. We assume that both of the transducers are normalized.
Let now $\cC=(R,A,C,\mu,\eta,r_0,R_F)$ be defined as follows:

\begin{itemize}

\item $R=R_1\uplus R_2\uplus R_3$ where
		\begin{itemize}
		\item $R_1=P\times Q\times B^{\leq m+1}\times B^{\leq m+1}$ where $m$ is the maximal length of a production of $\gamma$.
		\item $R_2=P\times Q\times B\uplus P\times B\times  2^{Q\times Q}$. This set is two-part. The first part will be used at the first step of the backward computation, to get the candidates. The second part consists of the information we need to store, plus a relation over the states of $Q$. This relation describes the set of states that has a run from the current position to one of the candidates. By determinism of $\cA$, one state can not have a path to two different candidates.
		\item $R_3=P\times B\times Q\times Q$. This set consists of the same information we need to store, plus two concurrent states that will allow us to track the position we need to get back to.
		\end{itemize}
		
\item $r_0=(p_0,q_0,\epsilon,\vdash)$.
\item $R_F=\{(p,q,u,v)\mid p\in F,q\in G\}$.
\end{itemize}

We now describe $\mu$ and $\eta$ in a succession of cases:
\begin{enumerate}
\item Let $r=(p,q,u,bv)$ be a state in $R_1$, $a$ be a letter of $A$, and let $(p',d)=\alpha(p,b)$. We now treat subcases depending on $d$, $u$ and $v$: 
		\begin{enumerate}
		\item if $d=0$, then $\mu(r,a)=((p',q,u,bv),0)$,
		\item if $d=+1$ and $v\neq \epsilon$, then $\mu(r,a)=((p',q,ub,v),0)$,
		\item if $d=-1$ and $u=u'b'\neq \epsilon$, then $\mu(r,a)=((p',q,u',b'bv),0)$.\\
		In these three subcases, $\eta(r,a)=\beta(p,b)$. It is the only cases where $\eta$ is non-empty. Thus in the following we will omit $\eta$.
		\item if $d=+1$ and $v=\epsilon$, let $q'=\delta(q,a)$ and $v'=\gamma(q,a)$. Then $\mu(r,a)=((q',p,\epsilon,bv'),+1)$.
		When the buffer is empty but $\cB$ is moving forward, then we just compute $\cA$ to fill the buffer.
		\item if $d=-1$ and $u=\epsilon$, then $\mu(r,a)=((p,q,b),-1)$ with $(p,q,b)\in R_2$.
		If we need to go backward, we enter the first part of mode 2 to compute the candidates for the previous state.
		\end{enumerate}

\item Let $r=(p,q,b)$ be a state in $R_2$ and $a$ a letter of $A$.
		Now let $Q'\subseteq Q$ be the set of states $q'$ such that $\delta(q',a)=q$.
		\begin{itemize}
		\item if $Q'=\{q'\}$, then let $v=\gamma(q',a)$ and $\mu(r,a)=((p,q',v,b),0)$,
		\item otherwise, let $Rel=\{(q',q')\mid q'\in Q\}$ and
		$\mu(r,a)=((p,b,Rel),-1)$.
		\end{itemize}
		We compute here the possible candidates. If there is only one, then we can decide, otherwise we register them all and start moving backward to decide.

\item Let $r=(p,b,Rel)\in R_2$, and $a$ be a letter of $A$.
	Let now $Nrel=\{(q_1,q_2)\mid (q_1,\delta(q_2,a))\in Rel\}$.
		\begin{itemize}
		\item if there exists only one state $q$ such that $\{q\}\times Q\cap Nrel\neq \emptyset$, then let $q'\neq q$ be such that we can find $q_1$ and $q_2$ with $(q,q_1)$ and $(q',q_2)$ in $Rel$. Then $\mu(r,a)=((p,b,q_1,q_2),+1)$ with $(p,b,q_1,q_2)\in R_3$.\\
		We compute the new set of states that have a path to our candidates with one more step backward. If there now exists only one candidate left, we follow this path in mode 3, together with an other option, to be able to get back to the required position.
		\item otherwise, if $a=\vdash$, we proceed as in the previous case with $q$ such that $(q,q_0)\in Nrel$.\\
		If we reached the beginning of the input, then the right candidate is the one with a path starting at $q_0$.
		\item otherwise, then $\mu(r,a)=((p,b,Nrel),-1)$.\\
		If we can not decide now, we continue our way back.
		\end{itemize}		 

\item Let $r=(p,b,q_1,q_2)$ be a state in $R_3$ and $a$ be a letter of $A$.
	\begin{itemize}
	\item if $\delta(q_1,a)\neq \delta(q_2,a)$, then $\mu(r,a)=(p,b,\delta(q_1,a),\delta(q_2,a))$.\\
	If the two paths do not collide next step, then we are not at the required position yet, so we continue computing the paths.
	\item if $\delta(q_1,a)=\delta(q_2,a)$ then let $v=\gamma(q_1,a)$ and
		$\mu(r,a)=((q_1,p,v,b),0)$.\\
		If they do collide, then $q_1$ is the right candidate and we can fill the buffer.
	\end{itemize}
	
\item If $r=(g,q,u,v)$ with $g\in G$ then let $(q',d)=\delta(q,a)$ and
		$\mu(r,a)=((g,q',\epsilon,\epsilon),d)$.\\
		If we finished the computation on $\cB$, we simply finish the one on $\cA$ to ensure that the function realized by $\cA$ is defined on the input.
\end{enumerate}

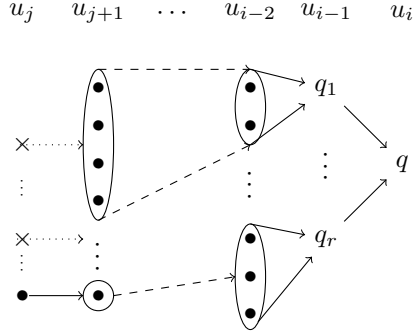
\begin{figure}
  \begin{center}
  \begin{tikzpicture}[initial text=]

\node (ui) at (0,3) {$u_j$};
\node (ui1) at (1,3) {$u_{j+1}$};
\node () at (2,3) {$\cdots$};
\node (uj2) at (3,3) {$u_{i-2}$};
\node (uj1) at (4,3) {$u_{i-1}$};
\node (uj) at (5,3) {$u_i$};

\node (q) at (5,1) {$q$};
\node (q1) at (4,2) {$q_1$};
\node (q2) at (4,0) {$q_r$};
\node[rotate=90] () at (4,1) {$\cdots$};

\draw[->] (q1) -- (q);
\draw[->] (q2) -- (q);

\node (q11) at (3,2) {$\bullet$};
\node (q12) at (3,1.5) {$\bullet$};
\node[rotate=90] () at (3,0.75) {$\cdots$};

\node (q21) at (3,-1) {$\bullet$};
\node (q22) at (3,-0.5) {$\bullet$};
\node (q23) at (3,0) {$\bullet$};

\draw (3,1.75)ellipse(0.2 and 0.5);
\draw[->] (q11)+(0,0.25) -- (q1);
\draw[->] (q12)+(0,-0.25) -- (q1);

\draw (q22)ellipse(0.2 and 0.7);
\draw[->] (q23)+(0,0.2) -- (q2);
\draw[->] (q21)+(0.1,-0.1) -- (q2);

\node (p11) at (1,2) {$\bullet$};
\node (p12) at (1,1.5) {$\bullet$};
\node (p13) at (1,1) {$\bullet$};
\node (p14) at (1,0.5) {$\bullet$};
\node[rotate=90] () at (1,-0.2) {$\cdots$};

\node (p21) at (1,-0.75) {$\bullet$};

\draw (p13)+(0,0.25)ellipse(0.2 and 1);
\draw (p21)circle(0.2);

\draw[->,dashed] (p11)+(0,0.25) -- (3,2.25);
\draw[->,dashed] (p14)+(0,-0.25) -- (3,1.25);
\draw[->,dashed] (p21)+(0.2,0)-- (2.8,-0.5);

\node (no2) at (0,0) {$\times$};
\draw[dotted,->] (no2)+(-0.07,0) -- (0.8,0);
\node (no) at (0,1.25) {$\times$};
\draw[dotted,->] (no)+(-0.033,0) -- (0.8,1.25);
\node[rotate=90,scale=0.6] () at (0, 0.7) {$\cdots$};
\node[rotate=90,scale=0.6] () at (0,-0.3) {$\cdots$};

\node[minimum size=0cm,inner sep=0pt,outer sep=0pt]  (yay) at (0,-0.75) {$\bullet$};
\draw[->] (yay) -- (0.8,-0.75);

\end{tikzpicture}
  \end{center}
  \caption{The transducer $\mathcal{C}$ reads from position $i-1$ to $j$
  			calculating the runs of $\mathcal{A}$ leading to each potential state.
  			Here, the correct candidate is state $q_r$, decided at position $j$.}\label{Comp-Rewinding}
\end{figure}

\begin{figure}
  \Large
  \begin{center}
  \begin{tikzpicture}[initial text=]

\node (ui) at (0,3) {$a_i$};
\node (ui1) at (1,3) {$a_{i+1}$};
\node (dot) at (2,3) {$\cdots$};
\node (uj1) at (3,3) {$a_{j-1}$};
\node (uj) at (4,3) {$a_j$};

\node (q2) at (1,2) {$q_2^k$};
\node (dot2) at (2,2) {$\cdots$};
\node (qp) at (3,2) {$q_2$};

\node (trouve) at (0,1) {};
\node (q1) at (1,1) {$q_1^k$};
\node (dot1) at (2,1) {$\cdots$};
\node (q) at (3,1) {$q_1$};
\node (origine) at (4,1) {$q$};

\draw[->] (q2) -- (dot2);
\draw[->] (dot2) -- (qp);
\draw[->] (qp) -- (origine);
\draw[->] (trouve) -- (q1);
\draw[->] (q1) -- (dot1);
\draw[->] (dot1) -- (q);
\draw[->] (q) -- (origine);
\end{tikzpicture}
  \end{center}
  \caption{Backwards moves using states from $R_3$.}\label{Comp3mod}
\end{figure}
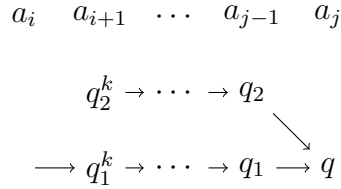

Let us now prove that aperiodicity is preserved by this construction, meaning that if both $\cA$ and $\cB$ are aperiodic transducers, then the resulting $\cC$ will also be aperiodic. We denote by $n_\cA$ and $n_\cB$ the aperiodicity indexes of $\cA$ and $\cB$ respectively, and let us prove that $\cC$ is aperiodic with index $n=2n_\cA+n_\cB+1$. \\
We first make a few remarks:
\begin{enumerate}
\item\label{Cap-R1avant}The transducer $\cC$ can only go forward in states of $R_1$ or $R_3$. Furthermore, if a run is in a state of $R_1$, then any further position visited by the run will be first reached in a state of $R_1$. This is due to the construction of $\cC$, the states of $R_3$ are used to go back to the last position before a position reached by a state of $R_1$.
\item\label{Cap-R2arr}The transducer $\cC$ can only move backward in states of $R_1$ or $R_2$. But since moving backward in a state of $R_1$ leads to a state of $R_2$, the transducer only reaches previous positions in states of $R_2$.
\end{enumerate}
We now treat the case of each congruence separately.
\begin{itemize}
\item $u^n\LR u^{n+1}$.
	Let assume there exists a run starting at a state $r$ on the left of $u^n$ and going out of it on the right in state $r'$. Note that the following proof also holds if we consider a run over $u^{n+1}$.
\begin{itemize}
\item If $r$ is in $R_1$, then, by \ref{Cap-R1avant}, $r'$ is also in $R_1$. Moreover, each iteration of $u$ is first reached by a state in $R_1$.
	Thus there exists an underlying run of $\cA$ on $u^n$, and after $n_\cA$ iterations of $u$, the run first enters $u$ in a state with the same $Q$ component $q$.
	Then, we look at the production $v$ of $\cA$ over a left-to-right run over $u$ starting and ending in state $q$.
	If $v=\epsilon$, then the run is just advancing waiting to fill its buffer. Thus the same run exists for $u^{n+1}$.
	Otherwise, then after $n_\cB$ more iterations, $\cB$ would have worked on $v^{n_\cB}$ and hence the $P$ component of the state is stabilized on $p$. Plus, if $v=v'b$, then the buffers will be $(v',b)$. Then at each iteration the run first exits $u$ in the same state $(p,q,v',b)$. Thus in this case we have an aperiodicity index of $n_\cA+n_\cB$.

\item If $r$ and $r'$ are in $R_3$, then the run never leaves states of $R_3$. Thus it is just two parallel runs of $\cA$. Then after $n_\cA$ iterations we get aperiodicity of the run.

\item If $r$ is in $R_3$ but $r'$ is in $R_1$, then after at most $n_\cA$ iterations we reached a state of $R_1$, otherwise we are in the previous case. We then are, after at most $n_\cA$ iterations of $u$, in the case where both $r$ and $r'$ are in $R_1$.
Thus by combining the two previous points, we get aperiodicity with an index of $2n_\cA+n_\cB$.
\end{itemize}	

\item $u^n\RL u^{n+1}$
	Let assume there exists a run starting at a state $r$ on the right of $u^n$ and going out of it on the left in state $r'$. Note that the following proof also holds if we consider a run over $u^{n+1}$.
	Then we can assume $r'$ to be in $R_2$. So the run is mainly a backward computation of $\cA$, with potentially computations over the buffer in between.
	\begin{itemize}
	\item If the run remains in states of $R_2$ long enough, say during $n_\cA$ iterations of $u$, then the states visited are of the form $(p,b,Rel)$, with $p$ and $b$ constants, while $Rel$ computes the sets of states having a run back to a given position. Thus in this case we know that for $n\leq n_\cA$ the $Rel$ component will always be the same of a given position of iterations of $u$. Then the same runs will exist for $u^n$ and $u^{n+1}$.
	\item Otherwise, it means that the run is able to compute the true candidate, and go back to the required position. Plus we know that after $n_\cA$ iterations (where we start counting from the left), the production will always be the same word $v$.
	 If $v$ is empty then the computation on $\cB$ does not move, thus the $P$ component stays the same, and we get aperiodicity of the run.
	 If $v$ is not empty, then after $n_\cB$ iterations (where we start counting from the right), $\cB$ worked on a word $v^{n_\cB}$. Then we know that far enough from the borders of the word, the run will first enters each iterations of $u$ in a same state, giving aperiodicity with index $n_\cA+n_\cB$.
	\end{itemize}
	
\item Finally, we can prove that if the relations $\LR$ and $\RL$ are aperiodic with an aperiodicity index of $n$, then the relations $\LL$ and $\RR$ are aperiodic with an aperiodicity index of $n+2$.
First remark that left-to-left and right-to-right runs over $u^{n+2}$ exists over $u^{n+3}$.
Now consider a run over an input word $u^{n+3}$.
We decompose it as follows.
We isolate one iteration of $u$ on each side of the input, and decompose the run in a succession of small runs over these isolated iterations, and runs over $u^{n+1}$.
Then we know that the transversal runs also exists over $u^n$, and the left-to-left and right-to-right runs will appear over $u^n$ plus the isolated iteration as described in Figure~\ref{Comp-Ap:LRtoLL}.

Then the transducer $\mathcal{C}$ is aperiodic, with an aperiodicity index bounded by $n_\mathcal{A}+n_\mathcal{B}+2$.
\end{itemize}
\begin{figure}

  \begin{tikzpicture}[initial text=,scale=0.65]

\node[minimum size=0cm] (debut) at (0,1) {};
\node[minimum size=0cm,inner sep=0pt,outer sep=0pt] (it1) at (2,1) {$|$};
\node[minimum size=0cm,inner sep=0pt,outer sep=0pt] (it2) at (5,1) {$|$};
\node[minimum size=0cm,inner sep=0pt,outer sep=0pt] (itna) at (7,1) {$|$};

\node[minimum size=0cm,inner sep=0pt,outer sep=0pt] (itfin) at (10,1) {$|$};
\node[minimum size=0cm,inner sep=0pt,outer sep=0pt] (itfin2) at (12,1) {$|$};
\node[minimum size=0cm,inner sep=0pt,outer sep=0pt] (itfin3) at (14,1) {};

\node (u1) at (1,1.3) {$u$};
\node (u2) at (6,1.3) {$u$};
\node (dot) at (3.5,1.3) {$\cdots$};
\node (una) at (11,1.3) {$u$};
\node (dot) at (8.5,1.3) {$\cdots$};
\node () at (13,1.3) {$u$};

\draw[|-|,thick] (debut)--(itfin3);

\node () at (-0.5,-2.9) {$\mathcal{B}$};

\draw[->] (0,0) -- (14,0);
\node () at (-0.5,0) {$\mathcal{A}$};

\footnotesize{
\node[minimum size=0cm] () at (0,0) {};
\node[minimum size=0cm,inner sep=0pt,outer sep=0pt] () at (2,0) {$|$};
\node[minimum size=0cm,inner sep=0pt,outer sep=0pt] () at (5,0) {$|$};
\node[minimum size=0cm,inner sep=0pt,outer sep=0pt] () at (7,0) {$|$};
\node[minimum size=0cm,inner sep=0pt,outer sep=0pt] () at (10,0) {$|$};
\node[minimum size=0cm,inner sep=0pt,outer sep=0pt] () at (12,0) {$|$};

}

\node () at (0.2,-0.4) {$q_1$};
\node () at (2,-0.4) {$q_2$};
\node () at (5,-0.4) {$q$};
\node () at (7,-0.4) {$q$};
\node () at (10,-0.4) {$q$};
\node () at (12,-0.4) {$q$};
\node () at (14,-0.4) {$q$};

\draw (0.1,1.6) .. controls (0.2,2.5) and (2.3,1.9) .. (2.5,2.4);
\draw (5,1.6) .. controls (4.8,2.5) and (2.7,1.9) .. (2.5,2.4);
\node (per) at (2.5,2.4) [above] {$n_\mathcal{A}$ iterations }; 

\draw (5,-0.6) -- (5,-0.8) -- (7,-0.8) -- (7,-0.6);
\draw[->] (6,-0.8) -- (6,-1.2);
\node () at (6,-1.5) {$v$};

\draw (10,-0.6) -- (10,-0.8) -- (12,-0.8) -- (12,-0.6);
\draw[->] (11,-0.8) -- (11,-1.2);
\node () at (11,-1.5) {$v$};

\node () at (8.5,-1.5) {$\ldots$};

\draw (12,-0.6) -- (12,-0.8) -- (14,-0.8) -- (14,-0.6);
\draw[->] (13,-0.8) -- (13,-1.2);
\node () at (13,-1.5) {$v$};

\draw[thick,|-|] (14,-1.9) -- (0,-1.9);
\node () at (12,-1.9) {$|$};
\node () at (10,-1.9) {$|$};
\node () at (7,-1.9) {$|$};
\node () at (5,-1.9) {$|$};

\node () at (-1.3,-1.9) {$\mathcal{A}(q_1,u^{n+1})$};
\node () at (-0.4,1) {$u^{n+1}$};

\node[minimum size=0cm,inner sep=0pt,outer sep=0pt] (p1) at (5,-2.7) {$|$};
\node () [below=0 of p1] {$p_1$};

\node[minimum size=0cm,inner sep=0pt,outer sep=0pt] (p2) at (7,-2.72) {$|$};
\node () [below=0 of p2] {$p_2$};

\node[minimum size=0cm,inner sep=0pt,outer sep=0pt] (pnm1) at (10,-2.75) {$|$};

\node[minimum size=0cm,inner sep=0pt,outer sep=0pt] (pn) at (12,-3.15) {$|$};
\node () [above=0 of pn] {$p$};

\node[minimum size=0cm,inner sep=0pt,outer sep=0pt] (pn2) at (14,-3.55) {$|$};
\node () [above=0 of pn2] {$p$};

\draw plot [smooth] coordinates {(0,-2.4) (4,-2.5) (1.3,-2.6) (p1) (p2) (pnm1) (11.2,-2.8) (9,-3) (pn) (13,-3.2) (11,-3.4) (pn2) };
\draw[->] (14,-3.55) -- ++(0.2,0.01);

\draw (5,-3.6) .. controls (5.2,-4.5) and (8.3,-3.5) .. (8.5,-4.2);
\draw (12,-3.6) .. controls (11.8,-4.5) and (8.7,-3.5) .. (8.5,-4.2);
\node () at (8.5,-4.4) [below] {$n_\mathcal{B}$ iterations}; 

\end{tikzpicture}
\caption{A left-to-right behavior of $u^n$ in $\mathcal{C}$ simulates a left-to-right behavior of $u^n$ in $\mathcal{A}$ 
and a left-to-right behavior of $\mathcal{A}(q_1,u^n)$ in $\mathcal{B}$.}\label{Comp-llrun}
\end{figure}
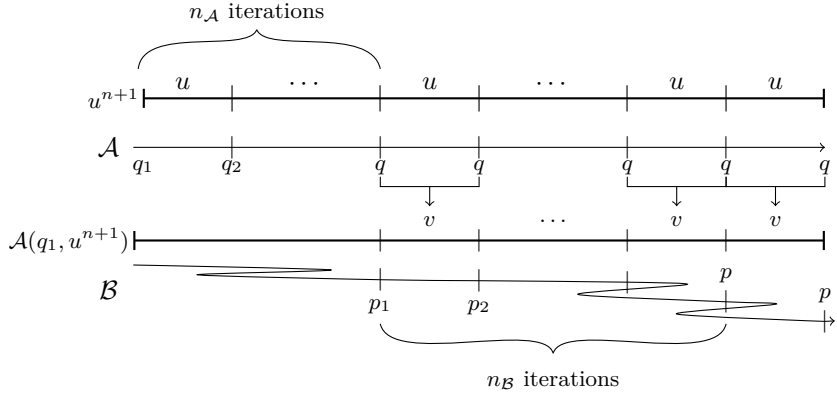

\begin{figure}
\begin{center}

\begin{tikzpicture}[scale=0.6]

\draw[|-|] (0,6) -- (1,6);
\draw[-|] (1,6) -- (4,6);
\draw[-|] (4,6) -- (5,6);
\node (u1) at (0.5,6.3) {$u$};
\node (u2) at (2.5,6.3) {$u^{n+1}$};
\node (u3) at (4.5,6.3) {$u$};

\draw[dashed] (1,6) -- (1,3.7);
\draw[dashed] (4,6) -- (4,3.7);

\draw (-0.3,5.5) .. controls (5.5,5.4) and (5.5,5.3) .. (2.5,5);
\draw (2.5,5) .. controls (1,4.8) and (1,4.7) .. (2.5, 4.5);
\draw[->] (2.5,4.5) .. controls (5.8, 4.3) and (5.9,4.2) .. (-0.3,4);

\node (to) at (6.2,5) {$\Longrightarrow$};

\draw[|-|] (8,6) -- (9,6);
\draw[-|] (9,6) -- (12,6);
\draw[-|] (12,6) -- (13,6);
\node (v1) at (8.5,6.3) {$u$};
\node (v2) at (10.5,6.3) {$u^n$};
\node (v3) at (12.5,6.3) {$u$};

\draw (7.7,5.5) .. controls (13.5,5.4) and (13.5,5.3) .. (10.5,5);
\draw (10.5,5) .. controls (8,4.8) and (8,4.7) .. (10.5, 4.5);
\draw[->] (10.5,4.5) .. controls (13.8, 4.3) and (13.9,4.2) .. (7.7,4);

\draw[dashed] (9,6) -- (9,3.7);
\draw[dashed] (12,6) -- (12,3.7);

\end{tikzpicture}
\caption{Transversal runs are shortened by aperiodicity of $\LR$ and $\RL$ and the right-to-right run overflows to the left isolated iteration.}\label{Comp-Ap:LRtoLL}

\end{center}
\end{figure}
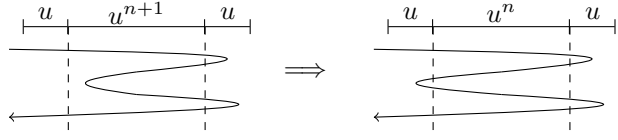
\end{proof}

\subsection*{From aperiodic two-way transducers to $\FO$ transductions}
\setcounter{theorem}{11}
\begin{lemma}
Let $\mathcal{A}$ be an aperiodic two-way automaton.
Then for any pair of states $q$ and $q'$ of $\mathcal{A}$, there exists a $\FO$-formula 
$\varphi^{q,q'}(x,y)$ such that for any word $u$ in the domain of $\mathcal{A}$ and any pair of positions $i$ and $j$ of $u$, 
$$u\models \varphi^{q,q'}(x/i,y/j)$$
if, and only if, the run of $\mathcal{A}$ over $u$ starting at position $i$ in state $q$ eventually reaches the position $j$ in state $q'$.

\end{lemma}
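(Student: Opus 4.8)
The plan is to express the reachability relation ``the run from $(q,i)$ eventually visits $(q',j)$'' as a Boolean combination of first-order statements about the behaviours of the factors of $u$ lying to the left and to the right of the positions $i$ and $j$. The starting point is Lemma~\ref{2w-FOT:locallanguages}: each class of $\LL$, $\LR$, $\RL$ and $\RR$ is $\FO$-definable, and by relativising such a sentence to the interval delimited by two positions (using the order, which is available in our signature, so that every quantifier is guarded by $x\leq z\leq y$) one obtains, for each behaviour class $C$, an $\FO$-formula $\beta_C(x,y)$ holding exactly when the factor delimited by positions $x$ and $y$ belongs to $C$. Thus the four behaviours $\bh_{\ell\ell}$, $\bh_{\ell r}$, $\bh_{r\ell}$, $\bh_{rr}$ of any factor with $\FO$-definable endpoints are $\FO$-accessible; the endmarkers cause no trouble since $\vdash$ and $\dashv$ sit at the $\FO$-definable extremal positions. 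This is the only place where aperiodicity is used.

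The decisive second ingredient is that, because $\mathcal{A}$ is deterministic and $Q$ is finite, the head can visit a \emph{fixed} position in at most $|Q|$ distinct states: were it to return to position $j$ twice in the same state, the run would be periodic thereafter, so every state ever assumed at $j$ already occurs among the first $|Q|$ visits. I would therefore fix $j$ and introduce the \emph{return map} $R_j\colon Q\to Q\cup\{\uparrow\}$ sending a state $t$ at position $j$ to the state of the next visit to $j$ (or $\uparrow$ if there is none). One step of $R_j$ is determined by $\delta(\cdot,a_j)$ together with $\bh_{rr}$ of the prefix $u[0..j{-}1]$ (governing a leftward excursion, which cannot escape past $\vdash$) and $\bh_{\ell\ell}$ of the suffix $u[j{+}1..n{+}1]$ (governing a rightward excursion, which terminates only by passing $\dashv$), all $\FO$-accessible by the previous paragraph. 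Hence ``$q'$ is assumed at $j$ once the run first arrives there in state $s$'' is expressible by existentially guessing a bounded sequence $s=u_0,u_1,\dots,u_k=q'$ with $k\leq|Q|$ and asserting $u_{t+1}=R_j(u_t)$ for each $t$; as $k$ and $Q$ are finite this stays within $\FO$.

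It remains to produce, inside the same formula, the state $s$ in which the run first reaches position $j$. Assuming $i<j$ (the case $i=j$ giving $s=q$, and $i>j$ being symmetric), the head is confined to $\{0,\dots,j{-}1\}$ before its first visit to $j$, so $s$ arises from the analogous bounded bouncing of the run between the factors $u[i..j{-}1]$ and $u[0..i{-}1]$: entering $u[i..j{-}1]$ from the left in state $q$, it either exits on the right, thereby reaching $j$, or exits on the left, is reflected by $\bh_{rr}$ of $u[0..i{-}1]$, and re-enters; by determinism this stabilises within $|Q|$ bounces, so a bounded existential over state sequences again suffices. The formula $\varphi^{q,q'}(x,y)$ is then the disjunction over the three cases of the statement ``there exists a first-arrival state $s$, certified by the bouncing formula, such that $q'$ is reachable from $s$ by iterating $R_j$''. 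Every ingredient is $\FO$, so $\varphi^{q,q'}$ is.

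The main obstacle is precisely the unbounded back-and-forth motion of the two-way head, which at first sight seems to require counting the crossings of each boundary and hence to escape $\FO$. The key realisation that dissolves it is that determinism caps the number of \emph{relevant} crossings by $|Q|$: both the first-arrival computation and the return-map iteration exhaust the states they will ever reach within $|Q|$ steps, turning the ``star'' in the composition law $\bh_{\ell r}(uv)=\bh_{\ell r}(u)\big(\bh_{\ell\ell}(v)\bh_{rr}(u)\big)^{*}\bh_{\ell r}(v)$ into a bounded, $\FO$-expressible iteration. Aperiodicity enters only to guarantee, through Lemma~\ref{2w-FOT:locallanguages}, that the per-step behaviours of the factors are themselves $\FO$-definable rather than merely regular.
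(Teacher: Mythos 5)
Your argument is correct and follows essentially the same route as the paper's own proof: both rest on Lemma~\ref{2w-FOT:locallanguages} together with relativisation of the class-defining sentences to the intervals cut out by $x$ and $y$ (guarding quantifiers with the order), and both reduce reachability from $(q,i)$ to $(q',j)$ to a finite disjunction over compatible tuples of behaviour classes of the factors of $u$. Your return-map and bounded-bouncing analysis simply makes explicit the step the paper states tersely, namely that the classes of these factors determine in which states position $j$ is visited by the run starting at $(q,i)$.
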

\begin{proof}
Without loss of generality we assume that $i$ is smaller or equal to $j$.
The theorem is proved by decomposing runs of $\mathcal{A}$ over $u$ in partial runs over $u[1,i-1]$, $u[i,j]$ and $u[j+1,|u|]$ as shown in Figure~\ref{2w-FOT:FigRun}.
Now let us remark that if the reading head is at position $i$ in a state $p$, then the next state in which the run reaches the positions $i$ or $j$ can be decided thanks to $\bh_{\ell r}(u[i,j])$, $\bh_{\ell\ell}(u[i,j])$ and
$\bh_{rr}(u[1,i-1])$.
Similarly if the reading head is at position $j$, then the next state depends on $\bh_{r\ell}(u[i,j])$, $\bh_{rr}(u[i,j])$ and 
$\bh_{\ell\ell}(u[j+1,|u|])$.
Then the equivalence classes of these three factors of $u$ decide in which states the position $j$ is visited by the run starting at position $i$ in a given state.
Thus the specification of $\varphi^{q,q'}(x,y)$, which is the fact that the run of $\mathcal{A}$ over $u$ starting at position $i$ in state $q$ eventually visits the position $j$ in state $q'$, entirely relies on the relation classes of these three factors of $u$.

Thanks to Lemma~\ref{2w-FOT:locallanguages}, we know that given $\mathcal{A}$ aperiodic, there exists $\FO$ formulas describing each behavior classes.
Then by guarding the quantifications of these formulas, we simultaneously select the behavior classes of the three factors.
For example, if $\varphi$ is the $\FO$ formula describing a given class of $\sim_\mathcal{A}$, then we can ensure that $u[i,j]$ is of said class by  replacing each quantification $\exists z$ (resp. $\forall z$) in $\varphi$ by
$\exists z~x\leq z\wedge z\leq y$ (resp. $\forall z~x\leq z\wedge z\leq y$), after having renamed any occurrence of $x$ and $y$.

Finally, as there exists only a finite number of classes for each behavior relation, there exists only a finite number of combinations of classes for the three factors that satisfy the specification of $\varphi^{q,q'}(x,y)$, which can thus be written as the disjunction over every compatible triplet of classes for $u[1,i-1]$, $u[i,j]$ and $u[j+1,|u|]$.

\begin{figure}
\begin{center}

\begin{tikzpicture}

\draw[|-|] (0,6) -- (1,6);
\draw[-|] (1,6) -- (4,6);
\draw[-|] (4,6) -- (5,6);

\node (u) at (-0.7,6) {$u$};
\node (i) at (1,6.3) {$i$};
\node (j) at (4,6.3) {$j$};

\draw[dashed] (1,6) -- (1,3.7);
\draw[dashed] (4,6) -- (4,3.7);

\node (q) at (0.8,5.5) {$q$};
\node (qp) at (4.2, 4.1) {$q'$};

\draw (1,5.5) .. controls (3,5.5) and (3,5.1) .. (1,5.1);
\draw (1,5.1) .. controls (0,5.1) and (0,4.7) .. (1,4.7);
\draw (1,4.7) -- (4,4.7);
\draw (4,4.7) .. controls (5,4.7) and (5,4.4) .. (4,4.4);
\draw (4,4.4) -- (1,4.4);
\draw (1,4.4) .. controls (2,4.27) and (2,4.23) .. (1,4.1);
\draw (1,4.1) -- (4,4.1);

\end{tikzpicture}
\caption{A run between two positions is decomposed as a succession of partial runs, characterized by the behaviors of each factor.}\label{2w-FOT:FigRun}

\end{center}
\end{figure}
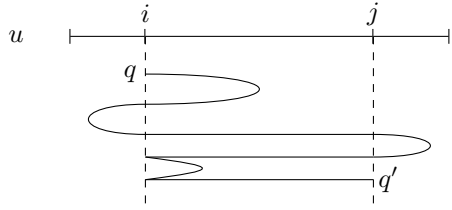
\end{proof}

\subsection*{From $\FO$ transductions to aperiodic two-way transducers}
\subsubsection*{Construction of the aperiodic transducer}
\setcounter{theorem}{17}
\begin{theorem}
Let $T$ be a $\FO$ graph transduction.
Then we can effectively construct an aperiodic two-way transducer with $\FO$ look ahead that realizes the same function over words.
\end{theorem}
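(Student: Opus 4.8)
The plan is to construct a two-way transducer with $\FO$ look around whose set of states is exactly the copy set $C$ of the transduction $T$, so that the reading head traces out the output structure of $T$ on the input word by jumping directly from one output position to its successor. Concretely, the transducer is in state $c \in C$ precisely when it is sitting on the input position $i$ that corresponds to the output node $(c,i)$ currently being emitted. To read off the output, when the head is at position $i$ in state $c$, a look-around formula $\varphi_b^c(x)$ determines the output letter $b$ to be written (using the position formulas of $T$), and a movement formula then jumps the head to the position $j$ and copy $c'$ that is the successor of $(c,i)$ in the output linear order. This successor relation is itself $\FO$-definable from the order formulas $\varphi_\leq^{c,c'}(x,y)$ of $T$: the successor of $(c,i)$ is the unique output node $(c',j)$ with $(c,i) < (c',j)$ and no output node strictly between them, which is expressible since $C$ is finite and each copy contributes one order formula.

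The key steps, in order, are as follows. First I would define the movement formulas: for each ordered pair $(c,c')$ of copies I would write a $\FO$ formula $\psi^{c,c'}(x,y)$ asserting that $(c',y)$ is the immediate successor of $(c,x)$ in the output order, by saying $(c,x) < (c',y)$ together with the nonexistence of any intermediate output node over all copies $c''$. Second, the unary look-around formulas in state $c$ select the output letter via the position formulas of $T$, which are mutually exclusive by hypothesis, and also guard that position $x$ actually carries an output node in copy $c$ (i.e. some $\varphi_b^c(x)$ holds). Third, I would fix the initial state and position to be the minimal output node and the final behavior to be exiting at the maximal output node; both the minimum and maximum are $\FO$-definable from the order formulas, and the domain formula of $T$ is reused as a guard on input membership. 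Determinism follows because the successor in a linear order is unique, so for each $(c,i)$ there is exactly one target $(c',j)$, matching the determinism condition in the definition of two-way transducers with $\FO$ look around.

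The final and most delicate step is establishing \emph{aperiodicity} of the constructed machine, and this is where I would invoke Lemma~\ref{FOT-2w:ParcFOT}. That lemma gives a single index $n$ such that for all contexts $v,w$, the words $vu^nw$ and $vu^{n+1}w$ lie in the domain together and satisfy the same formulas of $T$ whenever the free variables range over positions of $v$ or $w$. The point is that the entire behavior of the transducer on the $v,w$ part of the input — which letters it writes there and in which order it jumps among positions of $v$ and $w$ — is determined solely by the values of the position and order formulas of $T$ at those positions. Since Lemma~\ref{FOT-2w:ParcFOT} guarantees these agree for $vu^nw$ and $vu^{n+1}w$, the context paths $path_{vw}(vu^nw)$ and $path_{vw}(vu^{n+1}w)$ coincide, which is exactly the path-context definition of aperiodicity given in this section.

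I expect the main obstacle to be the bookkeeping that translates the abstract statement of Lemma~\ref{FOT-2w:ParcFOT} — equality of \emph{formula values} on the context — into equality of \emph{context paths} of the machine. One must argue that a jump of the reading head from a $v$-or-$w$ position to another $v$-or-$w$ position corresponds to a formula $\psi^{c,c'}$ evaluated at those two positions (with the rest of the word as a fixed context), and hence is preserved; the subtlety is that a single such jump in the context path may be the composite of many successor-jumps that pass through the $u^n$ block, so one must check that the \emph{endpoints} landing in $v$ or $w$ are governed by an order formula relating two $v$-or-$w$ positions, which is again a $\FO$ formula whose value Lemma~\ref{FOT-2w:ParcFOT} controls. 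Once this correspondence is made precise, aperiodicity with index $n$ follows directly.
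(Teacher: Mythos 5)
Your proposal is correct and follows essentially the same route as the paper's proof: states are the copies of $T$ plus initial/final states, the movement formulas define the immediate successor in the output order from the $\varphi_\leq^{c,c'}$ formulas, determinism comes from the linearity of the output structure, and aperiodicity is derived from Lemma~\ref{FOT-2w:ParcFOT} via the observation that the context path is entirely determined by the truth values of the position and order formulas at positions of $v$ and $w$. Your explicit discussion of why jumps whose intermediate successor-steps pass through the $u^n$ block are still controlled by formulas evaluated only at context positions is a point the paper treats more tersely, but it is the same argument.
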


\begin{proof}
Let $T=(A,B,\varphi_{dom},C,\varphi_{pos},\varphi_\leq)$ be a $\FO$ transduction.
Then we construct a two-way transducer with $\FO$ look around $\mathcal{A}=(Q,A,B,\Delta,i,\{f\})$ that realizes the function $T$, and prove that
it is aperiodic.
The main idea is that from the order formulas of $T$, we can deduce the successor relation. By setting the set of states as set of copies, the successor relation corresponds to the transitions of the transducer. We add an initial and a final states that are reached on the endmarkers.
Thus, we set $Q=C\uplus \{i,f\}$.

For readability purposes, we first set $\varphi_*^{c}(x)=\bigvee_{b\in B}\varphi_b^c(x)$, a formula which is satisfied if the $c$ copy of the node quantified by $x$ exists in the output structure. 
To define the transition relation, we now set, for each pair of copies, the successor formula
$$
 S^{c,c'}(x,y)= \varphi_\leq^{c,c'}(x,y)\wedge 		
 			\forall z \bigwedge_{d\in C}
					\big(\varphi_*^d(z)\to\bigl(\varphi_\leq^{d,c}							(z,x)\vee\varphi_\leq^{c',d}(y,z)\bigr)\big)
$$
Given a position $x$ and a copy $c$, there exists at most one $c'$ and one position $y$ such that $S^{c,c'}(x,y)$ is true, but only if the input word is in the domain of the transduction, and if we range over the existing nodes of the output structure.
Thus we define $\psi^{c,c'}(x,y)=S^{c,c'}(x,y)\wedge \varphi_*^c(x)\wedge\varphi_*^{c'}(y)\wedge \varphi_{dom}$.
The fact that the transducer can only move to existing nodes will ensure the determinism of the machine.
The transition relation $\Delta$ is then defined as set of tuples
$(c,\varphi_b^c(x),c',b,\psi^{c,c'}(x,y))$. Note that the transitions are mutually exclusive by definition of graph transduction, and that they also handle the production.
We also add to $\Delta$ some transitions regarding the endmarkers and the initial and the final states~:
\begin{itemize}
\item For the initial case, we define the formula 
$$first^c(y)=\varphi_*^c(y)\wedge\forall x \bigwedge_{d\in C}\big(\varphi_*^d(x)\to \varphi_\leq^{c,d}(y,x) \big)$$
that is satisfied by the first node of the output structure, and add the transition $(i,\vdash\!\!(x),c,\epsilon,first^c(y)\wedge\varphi_{dom})$ that moves the reading head from the initial position to the first node.
One should note that we consider without generating problems that the formula $first^c(y)$ is a formula $\psi(x,y)$ with two free variables, where $x$ is not used.
\item For the final case, we similarly define the formula
$$last^c(y)=\varphi_*^c(y)\wedge\forall x \bigwedge_{d\in C}\big(\varphi_*^d(x)\to \varphi_\leq^{d,c}(x,y)\big) $$
and add the transitions $(c,\varphi_b^c(x)\wedge last^c(x)\wedge\varphi_{dom},f,b,\dashv\!(y))$.
Note that these transitions handle the production of the label of the last node of $T$.
\end{itemize}
	
Since the production of $T$ over its domain is a linear graph, there is exactly one node that satisfies a formula $first^c(x)$ and exactly one that satisfies a formula $last^c(x)$ when the input satisfies $\varphi_{dom}$.
And since any other node has exactly one successor, the resulting transducer $\mathcal{A}$ is deterministic.

Now remark that the reading head of the transducer $\mathcal{A}$ follows exactly the output structure of $T$.
Then according to Lemma~\ref{FOT-2w:ParcFOT}, there exists an aperiodicity index $n$ such that for any words $u,v$ and $w$, if $vu^nw$ is in the domain of $T$, then $vu^{n+1}w$ is also in the domain of $T$ and the formulas of $T$ have the same truth value when the free variables range over $v$ and $w$.
Plus, as we consider first-order formulas, we know that we can choose an integer $n$ such that the words $u^n$ and $u^{n+1}$ satisfy the same formulas of $T$.
Then the moves of the input head on the words $vu^nw$ and $vu^{n+1}w$
either both are in iterations of $u$, or both are outside.
Then the context paths $path_{vw}(vu^nw)$ and $path_{vw}(vu^{n+1}w)$, which are the traces of the runs outside of $u^n$ and $u^{n+1}$ respectively, will be equal,
 proving the aperiodicity of $\mathcal{A}$.
\end{proof}

\end{document}